\newtheorem{theorem}{Theorem}[section]
\newtheorem{definition}[theorem]{Definition}
\newtheorem{corollary}[theorem]{Corollary}
\newtheorem{lemma}[theorem]{Lemma}
\newtheorem{proposition}[theorem]{Proposition}
\newtheorem{example}[theorem]{Example}
\numberwithin{equation}{section}
\begin{document}
	\title{Principal hierarchy of Infinite-dimensional Frobenius Manifold
		Underlying the Extended Kadomtsev-Petviashvili Hierarchy}
	\author{Shilin Ma$^{a}$}
	\date{\textit{$^{a}$School of Mathematical Science, Tsinghua University, Beijing, 100084, China}}
	\maketitle
	\begin{abstract}
		In this paper, we construct the principal hierarchy of the infinite-dimensional Frobenius manifold
		underlying the extended Kadomtsev-Petviashvili hierarchy. We show that this hierarchy
		serves as an extension of the genus zero Whitham hierarchy with one movable pole.
		
		\bigskip
		\noindent \textit{Keywords}: Frobenius manifold, extended Kadomtsev-Petviashvili hierarchy, Whitham hierarchy
	\end{abstract}
	\section{Introduction
	}
	The concept of Frobenius manifold, introduced by Dubrovin in [1], provides a geometric formulation
	of the associativity equations of two-dimensional topological field theory (2D TFT). This concept has important applications in several branches of mathematical physics, including Gromov-Witten theory,
	singularity theory, integrable systems ect., see for example, [2, 3, 4, 5, 6] and references therein.
	
	Every Frobenius manifold is associated with a principal hierarchy of Hamiltonian equations of
	hydrodynamic type, in which the unknown functions depend on one scalar spatial variable and some
	time variables. In the case where the corresponding Frobenius manifold is semisimple, the principal
	hierarchy can be extended to a full hierarchy. The tau function selected by the string equation of this
	hierarchy yields the partition function of the two-dimensional topological field theory. This important
	connection between Frobenius manifolds and 2D TFT has led to significant progress in the study of
	both topics, and has deepened our understanding of the geometric and algebraic structures that arise
	in these contexts. Please see [4, 7, 8, 9] for details.

	Trials to extend the above programme to integrable systems with two spatial dimensions started in
	recent years. The first significant progress in this direction was made by Carlet, Dubrovin and Mertens
	[10] in consideration of the dispersionless 2D Toda lattice hierarchy. More exactly, they discovered an
	infinite-dimensional Frobenius manifold structure on a space of pairs of certain meromorphic functions
	with single poles at the origin and at infinity. Building on this approach, several more infinite-dimensional
	Frobenius manifolds have been constructed by considering pairs of meromorphic functions
	with higher-order poles at the origin and at infinity, and such manifolds underlie the two-component
	Kadomtsev-Petviashvili hierarchy of Type B [11], the $(N,M)$-type Toda lattice hierarchy [12] and the
	two-component extension of the Kadomtsev-Petviashvili (KP) hierarchy (or the genus zero Whitham hierarchy
	with one movable pole, without the logarithm flow) [13]. In a different approach, Raimondo [14]
	constructed a Frobenius manifold structure on the space of Schwartz functions for the dispersionless
	KP hierarchy. Additionally, Szablikowski [15] proposed a scheme for constructing Frobenius manifolds
	using the Rota-Baxter identity and a modified version of the Yang-Baxter equation for the classical
	r-matrix.
	
	In order to obtain the principal hierarchy of an infinite-dimensional Frobenius manifold, one must
	find the solution in a normal form (Levelt form) of the associated deformed flatness equations. In
	a recent paper, Carlet and Mertens [16] obtain the principal hierarchy of the infinite-dimensional
	Frobenius manifold that was constructed in a previous reference. This hierarchy, which is characterized
	by an extra ”logarithmic” flow, serves as an extension of the dispersionless 2D Toda hierarchy. Inspired by their groundbreaking work, our present study aims to constructing the principal hierarchy of the infinite-dimensional Frobenius manifold that underlies the two-component extension of the KP hierarchy, employing a distinctive approach. Furthermore, we demonstrate that this hierarchy serves as a natural extension of the genus zero Whitham hierarchy with two marked points.
	
	To provide a more precise statement of our results, it is necessary to review the definitions of
	Frobenius manifold and associated principal hierarchy following [1, 4].

	A Frobenius algebra $(A, \circ, e,\langle\ ,\ \rangle)$ is defined as a commutative associative algebra $(A, \circ)$ with a unity element $e$ and a nondegenerate symmetric invariant bilinear form $\langle$\ ,\ $\rangle$. A Frobenius manifold of charge $d$ is an $n$-dimensional manifold $M$ where each tangent space $T_v M$ is equipped with a Frobenius algebra structure $\left(A_v=T_v M, \circ, e,\langle\ ,\ \rangle\right)$ that depends smoothly on $v \in M$, such that the following three axioms are satisfied:
	\begin{enumerate}
		\item The bilinear form $\langle\ ,\ \rangle$ is a flat metric on $M$, and the unity vector field $e$ satisfies $\nabla e=0$, where $\nabla$ denotes the Levi-Civita connection for the flat metric;
		\item Let $c$ be a 3-tensor defined as $c(X, Y, Z):=\langle X \circ Y, Z\rangle$ where $X, Y, Z \in T_v M$. Then the 4-tensor $\left(\nabla_W c\right)(X, Y, Z)$ is symmectric in $X, Y, Z, W \in T_v M$;
		\item There exists a vector field $E$, called the Euler vector field, which satisfies $\nabla \nabla E=0$ and the following conditions are satisfied for any vector fields $X$ and $Y$ on $M$:
		$$
		\begin{aligned}
			& {[E, X \circ Y]-[E, X] \circ Y-X \circ[E, Y]=X \circ Y,} \\
			& E(\langle X, Y\rangle)-\langle[E, X], Y\rangle-\langle X,[E, Y]\rangle=(2-d)\langle X, Y\rangle .
		\end{aligned}
		$$
	\end{enumerate}
	
	On an n-dimensional Frobenius manifold $M$, it is possible to select a set of flat coordinates $t=$ $\left(t^1, \cdots, t^n\right)$ in which the unity vector field is expressed as $e=\frac{\partial}{\partial t^1}$. In these chosen coordinates, we have a constant non-degenerate $n \times n$ matrix defined as follows:
	$$
	\eta_{\alpha \beta}=\left\langle\frac{\partial}{\partial t^\alpha}, \frac{\partial}{\partial t^\beta}\right\rangle, \quad \alpha, \beta=1, \cdots, n,
	$$
	and its inverse is denoted by $\eta^{\alpha \beta}$. The matrices $\eta_{\alpha \beta}$ and $\eta^{\alpha \beta}$ will be used to lower and to lift indices, respectively, and summations over repeated Greek indices are assumed. Let
	$$
	c_{\alpha \beta \gamma}=c\left(\frac{\partial}{\partial t^\alpha}, \frac{\partial}{\partial t^\beta}, \frac{\partial}{\partial t^\gamma}\right), \quad \alpha, \beta, \gamma=1, \cdots, n,
	$$
	then the product of the Frobenius algebra $T_v M$ can be expressed as
	$$
	\frac{\partial}{\partial t^\alpha} \circ \frac{\partial}{\partial t^\beta}=c_{\alpha \beta}^\gamma \frac{\partial}{\partial t^\gamma},
	$$
	where the coefficients $c_{\alpha \beta}^\gamma$ are given by
	$$
	c_{\alpha \beta}^\gamma=\eta^{\gamma \epsilon} c_{\epsilon \alpha \beta}
	$$
	that satisfy
	$$
	c_{1 \alpha}^\beta=\delta_\alpha^\beta, \quad c_{\alpha \beta}^\epsilon c_{\epsilon \gamma}^\sigma=c_{\alpha \gamma}^\epsilon c_{\epsilon \beta}^\sigma .
	$$
	Moreover, locally, there exists a smooth function $F(t)$ such that:
	$$
	\begin{aligned}
		c_{\alpha \beta \gamma} & =\frac{\partial^3 F}{\partial t^\alpha \partial t^\beta \partial t^\gamma}, \\
		\operatorname{Lie}_E F & =(3-d) F+\text { quadratic terms in t. }
	\end{aligned}
	$$
	In other words, the function $F$ solves the Witten-Dijkgraaf-Verlinde-Verlinde(WDVV) equation:
	$$
	\frac{\partial^3 F}{\partial t^\alpha \partial t^\beta, \partial t^\gamma} \eta^{\gamma \epsilon} \frac{\partial^3 F}{\partial t^\epsilon \partial t^\sigma \partial t^\mu}=\frac{\partial^3 F}{\partial t^\alpha \partial t^\sigma, \partial t^\gamma} \eta^{\gamma \epsilon} \frac{\partial^3 F}{\partial t^\epsilon \partial t^\beta \partial t^\mu},
	$$
	and its third-order derivatives $c_{\alpha \beta \gamma}$ are referred to as the 3-point correlator functions in topological field theory. Conversely, given a solution $F$ of the above WDVV equation, one can reconstruct the structure of a Frobenius manifold.
	
	For the Frobenius manifold $M$, its cotangent space $T_v^* M$ also carries a Frobenius algebra structure. This structure includes an invariant bilinear form and a product defined as follows:
	$$
	\left\langle d t^\alpha, d t^\beta\right\rangle=\eta^{\alpha \beta}, \quad d t^\alpha \circ d t^\beta=\eta^{\alpha \epsilon} c_{\epsilon \gamma}^\beta .
	$$
	Let
	\begin{equation}\label{inter}
	g^{\alpha \beta}=i_E\left(d t^\alpha \circ d t^\beta\right),
	\end{equation}
	then $\left(d t^\alpha, d t^\beta\right):=g^{\alpha \beta}$ defines a symmectric bilinear form, called the intersection form, on $T_v^* M$.
	
	The two bilinear forms $\eta^{\alpha \beta}$ and $g^{\alpha \beta}$ on $T^* M$ compose a pencil $g^{\alpha \beta}+\epsilon \eta^{\alpha \beta}$ of flat metrics parametrized by $\epsilon$. Consequently, they induce a bi-hamiltonian structure $\{\ ,\ \}_2+\epsilon\{\ ,\ \}_1$ of hydrodynamic type on the loop space $\left\{S^1 \rightarrow M\right\}$.
	
	The deformed flat connection on $M$ defined as
\begin{equation}\label{defcon1}
	\widetilde{\nabla}_X Y=\nabla_X Y+z X \circ Y, \quad X, Y \in \operatorname{Vect}(M)
	\end{equation}
	can be extended to a flat affine connection on $M \times \mathbb{C}^*$ by the following definitions:
	\begin{equation}\label{defcon2}
	\tilde{\nabla}_X \frac{d}{d z}=0, \quad \tilde{\nabla}_{\frac{d}{d z}} \frac{d}{d z}=0, \quad \tilde{\nabla}_{\frac{d}{d z}} X=\partial_z X+E \circ X-\frac{1}{z} \mathcal{V}(X), 
	\end{equation}
	where $X$ are vector field on $M \times \mathbb{C}^*$ with zero components along $\mathbb{C}$, and
	$$
	\mathcal{V}(X):=\frac{2-d}{2} X-\nabla_X E.
	$$
	There exists a system of deformed flat coordinates $\tilde{v}_1(t, z), \cdots, \tilde{v}_n(t, z)$ of $M$ that can be expressed as:
	\begin{equation}\label{defsol}
		\left(\tilde{v}_1(t, z), \cdots, \tilde{v}_n(t, z)\right)=\left(\theta_1(t, z), \cdots, \theta_n(t, z)\right) z^\mu z^R
	\end{equation}
	and such that
	$$
	\xi_\alpha=\frac{\partial \tilde{v}_\alpha}{\partial t^\beta} d t^\beta, \quad \alpha=1, \cdots, n, \quad \xi_{n+1}=d z
	$$
	yield a basis of solutions of the system $\widetilde{\nabla} \xi=0$. Here $\mu=\operatorname{diag}\left(\mu_1, \cdots, \mu_n\right)$ is a diagonal matrix defined by
	\begin{equation}\label{bigmu}
		\mu_\alpha \frac{\partial}{\partial t^\alpha}=\mathcal{V}\left(\frac{\partial}{\partial t^\alpha}\right), \quad \alpha=1, \cdots, n,
	\end{equation}
	and $R=R_1+\cdots+R_m$ is a constant nilpotnet matrix satisfing
	$$
	\begin{aligned}
		& \left(R_s\right)_\beta^\alpha=0 \text { if } \mu_\alpha-\mu_\beta=s, \\
		& \left(R_s\right)_\alpha^\gamma \eta_{\gamma \beta}=(-1)^{s+1}\left(R_s\right)_\beta^\gamma \eta_{\gamma \alpha} .
	\end{aligned}
	$$
	The fumctions $\theta_\alpha(t, z)$ are analytic near $z=0$ and can be expanded as follows:
	\begin{equation}
		\theta_\alpha(t, z)=\sum_{p \geq 0} \theta_{\alpha, p}(t) z^p, \quad \alpha=1, \cdots, n .
	\end{equation}
	Then, the equations \eqref{defcon1}, \eqref{defcon2} for the deformed flat coordinates \eqref{defsol} are equivalent to
	\begin{equation}\label{princon1}
	\frac{\partial^2 \theta_{\alpha, p+1}(t)}{\partial t^\beta \partial t^\gamma}=c_{\beta \gamma}^\epsilon(t) \frac{\partial \theta_{\alpha, p}(t)}{\partial t^\epsilon},
	\end{equation}
	and
	\begin{equation}\label{princon2}
	\operatorname{Lie}_E\left(\partial_\beta \theta_{\alpha, p}(t)\right)=\left(p+\mu_\alpha+\mu_\beta\right)\left(\partial_\beta \theta_{\alpha, p}(t)\right)+\partial_\beta \sum_{s=1}^p \theta_{\epsilon, s}(t)\left(R_{p-s}\right)_\alpha^\epsilon .
	\end{equation}
Morever, we impose the following initial condition\footnote{In the referenced work [1], an additional condition $\left\langle\nabla \theta_\alpha(t, z), \nabla \theta_\beta(t,-z)\right\rangle=\eta_{\alpha \beta}$ was imposed. Nevertheless, since this condition does not substantially affect the properties of the principal hierarchy, we will exclude it in the present paper for the sake of computational simplicity.} on $\theta_{\alpha, 0}(t)$ :
\begin{equation}\label{princon3}
\theta_{\alpha, 0}(t)=\eta_{\alpha \beta} t^\beta.
\end{equation}

Given a system of solutions $\left\{\theta_{\alpha, p}\right\}$ of the equations \eqref{princon1}-\eqref{princon3}, the principal hierarchy associated with $M$ is defined as the following system of Hamiltonian equations on the loop space $\left\{S^1 \rightarrow M\right\}$ :
\begin{equation}\label{prin}
\frac{\partial t^\gamma}{\partial T^{\alpha, p}}=\left\{t^\gamma(x), \int \theta_{\alpha, p+1} (t) d x\right\}_1:=\eta^{\gamma \beta} \frac{\partial}{\partial x}\left(\frac{\partial \theta_{\alpha, p+1} (t)}{\partial t^\beta}\right), \quad \alpha, \beta=1,2, \cdots, n, \ p \geq 0.
\end{equation}
These commutating flows are tau-symmectric, meaning that:
$$
\frac{\partial \theta_{\alpha, p}(t)}{\partial T^{\beta, q}}=\frac{\partial \theta_{\beta, q}(t)}{\partial T^{\alpha, p}}, \quad \alpha, \beta=1,2, \cdots, n, \ p, q \geq 0 .
$$
Furthermore, these flows can be expressed in a bihamiltonian recursion form as:
$$
\mathcal{R} \frac{\partial}{\partial T^{\alpha, p-1}}=\frac{\partial}{\partial T^{\alpha, p}}\left(p+\mu_\alpha+\frac{1}{2}\right)+\sum_{s=0}^p \frac{\partial}{\partial T^{\epsilon, p-s}}\left(R_{p-s}\right)_s^\epsilon,
$$
where $\mathcal{R}=\{\ ,\ \}_2 \cdot\{\ ,\ \}_1^{-1}$.

The main results of the present paper are summarized as follows.

The paper focuses on the infinite-dimensional Frobenius manifold denoted by $\mathcal{M}_{m, n}^s$ [13], which serves as the underlying structure for the two-component extension of the KP hierarchy defined in [17]. This manifold consists of a pair of Laurent series given by
$$
(a(z), \hat{a}(z))=\left(z^m+\sum_{i \leq m-2} a_i(z-\varphi)^i, \sum_{i \geq-n} \hat{a}_i(z-\varphi)^i\right),
$$
subject to certain additional conditions (explained in section 2.1 below). 
The set $\mathbf{t} \cup \mathbf{h} \cup \hat{\mathbf{h}}=\left\{t_i\right\}_{i \in \mathbb{Z}} \cup$ $\left\{h_j\right\}_{1 \leq j \leq m-1} \cup\{\hat{h}_k\}_{0 \leq k \leq n}$
comprises flat coordinates for the metric $\eta$ associated with $\mathcal{M}_{m, n}^s$.
\begin{theorem}\label{mainthm1}
	The smooth functions $\left\{\theta_{u, p}\right\}$ on $\mathcal{M}_{m, n}^s$ are defined as follows:
	\begin{equation}\label{theta}
		\theta_{u, p}= \begin{cases}\frac{1}{2 \pi \mathrm{i}} \frac{1}{(p+1) !} \frac{s}{i+s} \int_{\Gamma} \zeta^{\frac{i}{a}} \phi_{p+1} d z, & u=t_i(i \in \mathbb{Z}-\{-s\}), \\ \frac{1}{2 \pi \mathrm{i}} \frac{s}{m} \int_{\Gamma} \frac{a^p}{p !}\left(\log \frac{\zeta^{\frac{m}{s}}}{a}+c_p\right) d z, & u=t_{-s}, \\ -\frac{\Gamma\left(1-\frac{i}{m}\right)}{\Gamma\left(2+p-\frac{i}{m}\right)} \operatorname{Res}_{z=\infty} a^{1+p-\frac{i}{m}} d z, & u=h_i(1 \leq i \leq m-1), \\ \frac{\Gamma\left(1-\frac{i}{n}\right)}{\Gamma\left(2+p-\frac{i}{n}\right)} \operatorname{Res}_{z=\varphi} \hat{a}^{1+p-\frac{i}{n}} d z, & u=\hat{h}_i(0 \leq i \leq n-1), \\ \frac{1}{2 \pi \mathrm{i}} \frac{n}{m} \int_{\Gamma} \frac{\hat{a}^p}{p !}\left(\log \left(\zeta^{\frac{m}{s}} \hat{a}^{\frac{m}{n}}\right)-\frac{m}{n} c_p\right) d z-\frac{1}{2 \pi \mathrm{i}} \frac{n}{m} \int_{\Gamma} \frac{a^p}{p !}\left(\log \frac{\zeta^{\frac{m}{s}}}{a}+c_p\right) d z, & u=\hat{h}_n,\end{cases}
	\end{equation}
where
$$
\zeta(z)=a(z)-\hat{a}(z), \quad \phi_p(z)=a^p(z)-\hat{a}^p(z), \quad c_0=0, \quad c_p=\sum_{s=1}^p \frac{1}{s} .
$$
These functions satisfy the equalities \eqref{princon1}-\eqref{princon3} associated with $\mathcal{M}_{m, n}^s$, with the constant nilpotent matrix $R=R_1$ given by
$$
\left(R_1\right)_v^u= \begin{cases}1-\frac{s}{m}, & u \in\{t_0, \hat{h}_0\},\  v=t_{-s}, \\ \frac{n}{m}+1, & u=\hat{h}_0,\  v=\hat{h}_n, \\ \frac{n}{m}-\frac{n}{s}, & u=t_0, \ v=\hat{h}_n, \\ 0, & \text { for other cases. }\end{cases}
$$
Hence, the Hamiltionian densities $\left\{\theta_{u, p}\right\}$ give rise to an infinite family of tau-symmetric commutative flows on the loop space of $\mathcal{M}_{m, n}^s$ via the equations \eqref{prin}.
\end{theorem}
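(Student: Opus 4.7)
The plan is to verify the three defining conditions \eqref{princon1}, \eqref{princon2}, \eqref{princon3} for the proposed functions $\{\theta_{u,p}\}$ separately. The core technical ingredient is a set of variational formulas expressing $\partial a/\partial u$ and $\partial \hat a/\partial u$ for each flat coordinate $u\in\mathbf t\cup\mathbf h\cup\hat{\mathbf h}$; these follow from the explicit definitions of the flat coordinates recalled in Section~2 (as residues/integrals of $a$, $\hat a$, $\zeta$, following [13]) together with the dual basis structure of $\eta$. Once such formulas are available, everything reduces to contour integral computations on $\Gamma$ using integration by parts and the known structure constants of $\mathcal{M}_{m,n}^s$.

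For the initial condition \eqref{princon3}, I would substitute $p=0$ into each line of \eqref{theta} and evaluate the residues/integrals directly using the Laurent expansions of $a$ at $z=\infty$ and $\hat a$ at $z=\varphi$. The matching with $\eta_{uv}t^v$ is then a straightforward identification of coefficients, except in the logarithmic cases $u=t_{-s}$ and $u=\hat h_n$, where the convention $c_0=0$ and the specific combination of two integrals in the $\hat h_n$-line are chosen precisely to produce the correct linear combination of flat coordinates.

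For the recursion \eqref{princon1}, I would differentiate each formula of \eqref{theta} once with respect to $t^v$ via the variational formulas, and then observe that a second differentiation with respect to $t^w$, combined with integration by parts along $\Gamma$, rewrites the mixed second derivative as $c^\epsilon_{vw}\,\partial_\epsilon\theta_{u,p}$. The key algebraic identities are of the form $\partial_u\bigl(a^{p+1}/(p+1)!\bigr)=(a^p/p!)\,\partial_u a$ (and similarly for $\hat a$ and $\phi_{p+1}$), after which the Frobenius multiplication on $\mathcal{M}_{m,n}^s$ enters through the product of the two tangent variations, modulo the appropriate principal or regular parts at $z=\infty$ and $z=\varphi$. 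For the logarithmic lines ($u=t_{-s}$ and $u=\hat h_n$), the derivative of the logarithm produces extra rational terms that are absorbed into the lower-order term of the recursion through the precise choice of $c_p=\sum_{s=1}^p 1/s$.

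The condition \eqref{princon2} is verified by applying $\mathrm{Lie}_E$ to the integral representations. Since $E$ acts on $a$, $\hat a$, $\zeta$ by explicit scalings dictated by the weights $m$, $n$, $s$, its action on $\theta_{u,p}$ yields a homogeneous part $(p+\mu_u+\mu_v)\partial_v\theta_{u,p}$ plus an anomalous contribution coming from (i) the logarithmic integrands and (ii) the residues at the resonant exponents (those values of the parameter $i$ at which a $\Gamma$-factor has a pole cancelled by a vanishing residue). Matching these anomalies with $\partial_\beta\sum_s\theta_{\epsilon,s}(R_{p-s})^\epsilon_u$ determines the four nonzero entries of $R_1$ stated in the theorem, and one checks from the list of $\mu$-weights that no resonance of order $s\ge 2$ occurs, so that $R=R_1$. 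The main obstacle, in my view, is precisely this bookkeeping of the logarithmic contributions: proving that the choice of $c_p$ and the specific linear combination of two integrals appearing in the $\hat h_n$-line are exactly what is needed to make all resonant flows collapse into a single nilpotent $R_1$, rather than generating higher off-diagonal defects. Once \eqref{princon1}--\eqref{princon3} are established, the tau-symmetric commutativity of the flows in \eqref{prin} follows from the general theory recalled immediately after~\eqref{prin}.
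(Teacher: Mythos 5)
Your proposal is correct and follows essentially the same route as the paper: direct verification of \eqref{princon3} from the Laurent expansions defining the flat coordinates, an Euler/homogeneity computation for \eqref{princon2} in which the logarithmic densities produce exactly the anomalous lower-order terms that determine $R_1$, and the level-lowering identity $Q_{u,p}=\partial_a Q_{u,p+1}+\partial_{\hat a}Q_{u,p+1}$ (made valid by the choice of $c_p$) driving the recursion \eqref{princon1}. The only cosmetic difference is that the paper phrases the \eqref{princon1} check invariantly as $\nabla_{\partial}\, d\theta_{u,p+1}=C_{\partial}\, d\theta_{u,p}$ using the explicit cotangent-space operators \eqref{dulconn} and \eqref{dulcorpro} and a decomposition of $\partial$ into its $\partial\zeta_{+}$, $\partial\zeta_{-}$ and $\partial\ell$ components, rather than via mixed second derivatives in flat coordinates as you propose.
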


A general version of the Whitham hierarchy was investigated by Krichever [18], and now we focus on the case of meromorphic functions with a fixed pole at infinity and a movable pole. More precisely, let us consider two meromophic functions of $z$ on the Riemann sphere of the form
$$
\lambda(z)=z+\sum_{i \geq 1} v_i(x)(z-\varphi(x))^{-i}, \quad \hat{\lambda}(z)=\sum_{i \geq-1} \hat{v}_i(x)(z-\varphi(x))^i,
$$
with $x \in S^1$ being a loop parameter. In this paper by the (special) Whitham hierarchy we mean the following system of evolutionary equations:
\begin{align}
	& \frac{\partial \lambda(z)}{\partial s_k}=[\left(\lambda(z)^k\right)_{+}, \lambda(z)], \quad \frac{\partial \hat{\lambda}(z)}{\partial s_k}=[\left(\lambda(z)^k\right)_{+}, \hat{\lambda}(z)], \label{whi1}\\
	& \frac{\partial \lambda(z)}{\partial \hat{s}_k}=[-(\hat{\lambda}(z)^k)_{-}, \lambda(z)], \quad \frac{\partial \hat{\lambda}(z)}{\partial \hat{s}_k}=[-(\hat{\lambda}(z)^k)_{-}, \hat{\lambda}(z)],\quad k=1,2,\cdots,\label{whi2}
\end{align}
and
\begin{equation}\label{whi3}
	\frac{\partial \lambda(z)}{\partial \hat{s}_0}=[\log (z-\varphi(x)), \lambda(z)], \quad \frac{\partial \hat{\lambda}(z)}{\partial \hat{s}_0}=[\log (z-\varphi(x)), \lambda \hat{(z)}],
\end{equation}
where the Lie bracket $[\ ,\ ]$ reads
\begin{equation}\label{bracket}
	[f, g]:=\frac{\partial f}{\partial z} \frac{\partial g}{\partial x}-\frac{\partial g}{\partial z} \frac{\partial f}{\partial x}.
\end{equation}
\begin{theorem}\label{mainthm2}
 The principal hierarchy of $\mathcal{M}_{m, n}^s$, as formulated in Theorem \ref{mainthm1}, is an extension of the Whitham hierarchy \eqref{whi1}-\eqref{whi3}.
\end{theorem}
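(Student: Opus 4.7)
The plan is to reformulate the Whitham system \eqref{whi1}--\eqref{whi3} in Hamiltonian form on the loop space of $\mathcal{M}_{m,n}^s$, and then to identify a distinguished subfamily of the principal-hierarchy Hamiltonians $\{\theta_{u,p}\}$ from Theorem~\ref{mainthm1} whose flows \eqref{prin} reproduce \eqref{whi1}--\eqref{whi3}. The natural change of variables is $\lambda=a^{1/m}$, with the branch normalized so that $\lambda=z+O((z-\varphi)^{-1})$ near $z=\infty$, and $\hat\lambda=\hat a^{1/n}$, with the branch of order $(z-\varphi)^{-1}$ near $z=\varphi$. Under these substitutions $\lambda$ and $\hat\lambda$ acquire exactly the shape required by the Whitham setup, and the bracket \eqref{bracket} is the standard dispersionless Poisson bracket acting on the pair $(\lambda,\hat\lambda)$.

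The central technical lemma to establish is a Lax-form description of the Hamiltonian flows on $\mathcal{M}_{m,n}^s$ generated via \eqref{prin} by the densities in \eqref{theta}. Concretely, for any density of the form $\operatorname{Res}_{z=\infty}f(\lambda)\,dz$ the induced flow of $(a,\hat a)$ should coincide with $\bigl([f'(\lambda)_+,a],[f'(\lambda)_+,\hat a]\bigr)$, while for a density $\operatorname{Res}_{z=\varphi}g(\hat\lambda)\,dz$ it should coincide with $\bigl([-g'(\hat\lambda)_-,a],[-g'(\hat\lambda)_-,\hat a]\bigr)$; here $(\cdot)_\pm$ denotes the splitting of Laurent series into pieces regular at $\infty$ or at $\varphi$, respectively. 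I would derive this lemma by transporting the first Hamiltonian structure of $\mathcal{M}_{m,n}^s$ (described after \eqref{inter}) onto the variables $(\lambda,\hat\lambda)$ through a direct residue-calculus computation, paralleling the $r$-matrix reformulations familiar from the dispersionless KP and 2D Toda hierarchies.

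With this lemma in hand, the polynomial Whitham flows match up straightforwardly. Formula \eqref{theta} displays $\theta_{h_i,p+1}$ as a constant multiple of $\operatorname{Res}_{z=\infty}a^{2+p-i/m}\,dz=\operatorname{Res}_{z=\infty}\lambda^{m(p+2)-i}\,dz$, so as $(i,p)$ varies the lemma reproduces \eqref{whi1} for every exponent $k\not\equiv 0\pmod m$, up to scalar rescalings of $s_k$. The missing exponents $k\in m\mathbb{Z}_{\ge 1}$ are recovered from the logarithmic densities $\theta_{\hat h_n,p}$ and $\theta_{t_{-s},p}$, whose $\log a$ pieces produce $\lambda^{mp}$-type contributions after differentiation. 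A parallel analysis with $\theta_{\hat h_i,p}$, $0\le i\le n-1$, yields \eqref{whi2}, and the logarithmic flow \eqref{whi3} is matched by extracting the $\log(z-\varphi)$ component from $\log\bigl(\zeta^{m/s}\hat a^{m/n}\bigr)$ and $\log\bigl(\zeta^{m/s}/a\bigr)$ inside $\theta_{\hat h_n,0}$ and $\theta_{t_{-s},0}$, and applying the lemma with $\log\lambda$, $\log\hat\lambda$ in place of $f(\lambda),g(\hat\lambda)$.

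The hardest part is the Lax-form lemma of the second paragraph: one must explain how the innocuous expression $\eta^{uv}\partial_x\partial_v H$ in \eqref{prin} collapses into the projections $(\cdot)_\pm$, and this requires a delicate matching between the tangent space of $\mathcal{M}_{m,n}^s$ (spanned by derivatives of $a$ and $\hat a$ with respect to the flat coordinates $\mathbf{t}\cup\mathbf{h}\cup\hat{\mathbf{h}}$) and the Laurent-series decompositions at the two poles. Once this lemma is established, the verifications of the polynomial and logarithmic matches reduce to tracking normalizations (the $\Gamma$-factors, the factorials, and the branch conventions for the roots $a^{1/m}$ and $\hat a^{1/n}$). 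The remaining densities $\theta_{t_i,p}$ for $i\in\mathbb{Z}$ correspond to no Whitham flow; they generate the additional commuting flows that make the principal hierarchy a strict extension of \eqref{whi1}--\eqref{whi3}, their commutativity with the Whitham flows being automatic from the tau-symmetry already established in Theorem~\ref{mainthm1}.
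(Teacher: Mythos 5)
Your overall strategy is the paper's: rewrite \eqref{whi1}--\eqref{whi2} in Hamiltonian form with $H_k=-\frac{m}{k}\int\operatorname{Res}_{z=\infty}\lambda^k\,dx$, $\hat H_k=\frac{n}{k}\int\operatorname{Res}_{z=\varphi}\hat\lambda^k\,dx$, identify the relevant Poisson structure with the one induced by $\eta$, and then match Hamiltonian densities up to $\Gamma$-factors. The one structural difference is that the paper does not prove your ``Lax-form lemma'' from scratch: it imports the bihamiltonian formulation of the Whitham flows from [17] and the identification of the flat pencil $(\eta,g)$ with $(\mathcal{P}_1,\mathcal{P}_2)$ from [13], after which the needed Lax form follows from the one-line identities $[\partial Q_{u,p}/\partial a,a]+[\partial Q_{u,p}/\partial\hat a,\hat a]=0$ and $\partial_a Q_{u,p}+\partial_{\hat a}Q_{u,p}=Q_{u,p-1}$, giving $\mathcal{P}_1\cdot d\theta_{u,p}=\bigl(-[(Q_{u,p-1})_-,a],\,[(Q_{u,p-1})_+,\hat a]\bigr)$. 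Your plan to re-derive this by transporting $\eta$ to the variables $(\lambda,\hat\lambda)$ is workable but duplicates cited results; either way this step is the technical core and you correctly identify it as such.

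There is, however, a concrete error in your matching of the flows $\partial/\partial s_k$ with $k\in m\mathbb{Z}_{\ge1}$. These are \emph{not} recovered from the logarithmic densities $\theta_{\hat h_n,p}$ and $\theta_{t_{-s},p}$: differentiating $\frac{a^p}{p!}\bigl(\log(\zeta^{m/s}/a)+c_p\bigr)$ in $a$ and $\hat a$ reproduces a density of the same logarithmic form (that is how the recursion $\partial_aQ+\partial_{\hat a}Q=Q_{p-1}$ closes), so these densities generate genuinely logarithmic flows --- indeed the paper uses $\theta_{\hat h_n,1}$ precisely to produce \eqref{whi3}, via $\mathcal{P}_1\cdot d\theta_{\hat h_n,1}=\bigl([n\log(z-\varphi),a],[n\log(z-\varphi),\hat a]\bigr)$, i.e.\ $\partial/\partial\hat s_0=\frac1n\,\partial/\partial T^{\hat h_n,0}$ --- and never pure power flows. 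The correct source of the exponents $k=m(p+1)$ is the combination $\Gamma(p+1)\bigl(\int\theta_{t_0,p}\,dx+\int\theta_{\hat h_0,p}\,dx\bigr)$: since $\frac{1}{2\pi\mathrm{i}}\int_\Gamma\hat a^{p+1}dz=\operatorname{Res}_{z=\varphi}\hat a^{p+1}dz$ and $\frac{1}{2\pi\mathrm{i}}\int_\Gamma a^{p+1}dz=-\operatorname{Res}_{z=\infty}a^{p+1}dz$, the $\hat a$-contributions cancel and the sum collapses to $-\frac{1}{(p+1)!}\operatorname{Res}_{z=\infty}a^{p+1}dz$, which is $H_{m(p+1)}$ up to the stated factor. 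Consequently your closing remark that the $\theta_{t_i,p}$ ``correspond to no Whitham flow'' also needs amending: $\theta_{t_0,p}$ enters the identification, and only the remaining $t_i$-flows constitute the strict extension. With these corrections your argument becomes the paper's.
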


We anticipate that our results will contribute to the understanding of geometric structure underlying the Whitham hierarchy.

The article is organized as follows: In Section 2, we will review fundamental concepts concerning the infinite-dimensional Frobenius manifold $\mathcal{M}_{m, n}^s$ and provide the explicit form of the Levi-Civita connection of $\eta$ associated with $\mathcal{M}_{m, n}^s$. Section 3 furnishes the proof of Theorem \ref{mainthm1}. In Section 4, we investigate the relationship between the principal hierarchy of $\mathcal{M}_{m, n}^s$ and the Whitham hierarchy. Lastly, concluding remarks are presented in Section 5.
\section{Overview of the infinite-dimensional Frobenius Manifold $\mathcal{M}_{m, n}^s$}
In this section, we begin by recalling some fundamental aspects of the infinite-dimensional Frobenius manifold $\mathcal{M}_{m, n}^s$. Subsequently, we present the explicit form of the Levi-Civita connection of $\eta$ associated with $\mathcal{M}_{m, n}^s$.
\subsection{The manifold $\mathcal{M}_{m, n}^s$ as a bundle on the space of parametrized simple curve}
For any $\varphi$ within the unit disk $B:=\{z \in \mathbb{C}||z |<1\}$, we consider two sets of holomorphic functions defined on disks of the Riemann sphere $\mathbb{C} \cup\{\infty\}$ as follows:
$$
\begin{aligned}
	& \mathcal{H}_{\varphi}^{-}=\left\{f(z)=\sum_{i \geq 0} f_i(z-\varphi)^{-i} \mid f \text { holomorphic on }|z|>1-\epsilon \text { for some } \epsilon>0\right\}, \\
	& \mathcal{H}_{\varphi}^{+}=\left\{f(z)=\sum_{i \geq 0} f_i(z-\varphi)^i \mid f \text { holomorphic on }|z|<1+\epsilon \text { for some } \epsilon>0\right\} .
\end{aligned}
$$

Let $m$ and $n$ be arbitrary positive integers. We define the set $\tilde{\mathcal{M}}_{m, n}$ as the union of the following terms:
$$
\tilde{\mathcal{M}}_{m, n}=\bigcup_{\varphi \in B}\left(\left(z^m+(z-\varphi)^{m-2} \mathcal{H}_{\varphi}^{-}\right) \times(z-\varphi)^{-n} \mathcal{H}_{\varphi}^{+}\right) .
$$
The elements in $\tilde{\mathcal{M}}_{m, n}$ can be expressed as Laurent series of the form:
$$
\vec{a}=\left(z^m+\sum_{i \leq m-2} a_i(z-\varphi)^i, \sum_{i \geq-n} \hat{a}_i(z-\varphi)^i\right) .
$$
For any point $\vec{a}=(a(z), \hat{a}(z)) \in \tilde{\mathcal{M}}_{m, n}$, we introduce the functions $\zeta(z)$ and $\ell(z)$ as follows:
\begin{equation}
	\zeta(z)=a(z)-\hat{a}(z), \quad \ell(z)=a(z)_{+}+\hat{a}(z)_{-}\label{zetaell}
\end{equation}
where the subscripts ' $\pm$ ' mean to take the nonnegative and the negative powers in $(z-\varphi)$. Note that $\zeta(z)$ is defined on a neighborhood of the unit circle $\Gamma:=\{|z|=1, z \in \mathbb{C}\}$, and $\ell(z)$ is holomorphic on $\mathbb{P}^{1} \backslash\{\varphi, \infty\}$.

\begin{definition}\label{deffm}
	The manifold $\mathcal{M}_{m, n}^s$, as a subset of $\tilde{\mathcal{M}}_{m, n}$, is defined by the following conditions for points $\vec{a}=(a(z), \hat{a}(z)) \in \tilde{\mathcal{M}}_{m, n}$:
	\begin{enumerate}
		\item the coefficient $\hat{a}_{-n} \neq 0$;
		\item the derivatives $\zeta^{\prime}(z)$ and $\ell^{\prime}(z)$, where $\zeta(z)$ and $\ell(z)$ are defined in \eqref{zetaell}, do not vanish on $\Gamma$;
		\item there exists a holomorphic function $w(z)$ defined in a neighborhood of $\Gamma$, with a winding number around the origin equal to $1$, such that $w(z)^s=\zeta(z)$ and it maps $\Gamma$ to a simple, smooth curve $\Gamma^{\prime}$ encircling the origin;
		\item the winding number of the functions $a(z)$ and $\hat{a}(z)$ around the origin along $\Gamma$ is $m$ and $n$, respectively.
	\end{enumerate}

%We refer to the coefficients $\left\{a_i\right\}_{i \leq m-2} \cup\left\{\hat{a}_j\right\}_{j \geq-n} \cup\{\varphi\}$ as ordinary coordinates on $\mathcal{M}_{m, n}^s$.
\end{definition}

Let's describe the tangent and cotangent bundles of $\mathcal{M}_{m, n}^s$ using Laurent series. At each point $\vec{a}=(a(z), \hat{a}(z)) \in \mathcal{M}_{m, n}^s$, a vector $\partial$ in the tangent space $T_{\vec{a}} \mathcal{M}_{m, n}^s$ can be identified with its action $(\partial a(z), \partial \hat{a}(z))$. Therefore, the tangent space can be represented as follows:
\begin{equation}\label{tans}
T_{\vec{a}} \mathcal{M}_{m, n}^s=(z-\varphi)^{m-2} \mathcal{H}_{\varphi}^{-} \times(z-\varphi)^{-n-1} \mathcal{H}_{\varphi}^{+}.
\end{equation}
Accordingly, the cotangent space at $\vec{a}=(a(z), \hat{a}(z))$ is represented as the dual space of $T_{\vec{a}} \mathcal{M}_{m, n}^s$, denoted as:
\begin{equation}\label{cotans}
T_{\vec{a}}^* \mathcal{M}_{m, n}^s=(z-\varphi)^{-m+1} \mathcal{H}_{\varphi}^{+} \times(z-\varphi)^n \mathcal{H}_{\varphi}^{-} .
\end{equation}
The pairing between $(\omega(z), \hat{\omega}(z)) \in T_{\vec{a}}^* \mathcal{M}_{m, n}$ and $(\xi(z), \hat{\xi}(z)) \in T_{\vec{a}} \mathcal{M}_{m, n}$ is defined by:
\begin{equation}\label{pair}
\langle\vec{\omega}, \vec{\xi}\rangle:=\frac{1}{2 \pi \mathrm{i}} \int_{\Gamma}(\omega(z) \xi(z)+\hat{\omega}(z) \hat{\xi}(z)) d z .
\end{equation}
The representation of the tangent and cotangent spaces described above proves to be convenient and natural. In the following we will freely use both representations of vectors and covectors, often without specifying which one we are using, as it will typically be evident from the context.
\subsection{The metric}
In [13], the invariant metric on $\mathcal{M}_{m, n}^s$ is defined by
\begin{equation}
\langle\eta^{-1} \cdot \vec{\xi}_1, \vec{\xi}_2\rangle=\langle\vec{\xi}_1, \vec{\xi}_2\rangle_\eta, \quad  \vec{\xi}_1, \vec{\xi}_2 \in T_{\vec{a}} \mathcal{M}_{m, n}^s,
\end{equation}
where
\begin{equation*}
\eta: T_{\vec{a}}^* \mathcal{M}_{m, n}^s \rightarrow T_{\vec{a}} \mathcal{M}_{m, n}^s
\end{equation*}
is a linear bijection. For any $(\omega(z), \hat{\omega}(z)) \in T_{\vec{a}}^* \mathcal{M}_{m, n}^s$, the explicit form of the map $\eta$ is given by:
$$
\begin{aligned}
	\eta \cdot \vec{\omega}= & \left(a^{\prime}(z)[\omega(z)+\hat{\omega}(z)]_{-}-\left[\omega(z) a^{\prime}(z)+\hat{\omega}(z) \hat{a}^{\prime}(z)\right]_{-},\right. \\
	& \left.-\hat{a}^{\prime}(z)[\omega(z)+\hat{\omega}(z)]_{+}+\left[\omega(z) a^{\prime}(z)+\hat{\omega}(z) \hat{a}^{\prime}(z)\right]_{+}\right).
\end{aligned}
$$
The inverse map $\eta^{-1}$ is described by the following expressions: 
\begin{equation}\label{cov1}
\omega(z)_{+}=-\left(\frac{\xi(z)-\hat{\xi}(z)}{\zeta^{\prime}(z)}\right)_{+}, \quad \omega(z)_{-}=\left(\frac{\xi(z)}{a^{\prime}(z)}-\left(\frac{\xi(z)-\xi \hat{(z)}}{\zeta^{\prime}(z)}\right)_{-}\right)_{\geq-m+1},
\end{equation}
and
\begin{equation}\label{cov2}
\hat{\omega}(z)_{-}=\left(\frac{\xi(z)-\hat{\xi}(z)}{\zeta^{\prime}(z)}\right)_{-}, \quad \hat{\omega}_{+}=\left(-\frac{\hat{\xi}(z)}{\hat{a}^{\prime}(z)}+\left(\frac{\xi(z)-\hat{\xi}(z)}{\zeta^{\prime}(z)}\right)_{+}\right)_{\leq n},
\end{equation}
where $(\omega(z), \hat{\omega}(z))=\eta^{-1} \cdot(\xi(z), \hat{\xi}(z))$. In this context, we denote $(f(z))_{\geq m}=\sum_{s=m}^{\infty} b_s(z-\varphi)^s$ for a Laurent series $f(z)=\sum_{s=-\infty}^{+\infty} b_s(z-\varphi)^s$, and the expressions $\frac{1}{a^{\prime}(z)}$ and $\frac{1}{\hat{a}^{\prime}(z)}$ correspond to the formal
inverses of the Laurent series $a^{\prime}(z)$ and $\hat{a}^{\prime}(z)$, respectively. By using the formula \eqref{zetaell}, the invariant metric $\langle\ ,\ \rangle_\eta$ can be expressed as follows:
\begin{equation}\label{metric}
\left\langle\partial_1, \partial_2\right\rangle_\eta=-\frac{1}{2 \pi \mathrm{i}} \int_{\Gamma} \frac{\partial_1 \zeta(z) \cdot \partial_2 \zeta(z)}{\zeta^{\prime}(z)} d z-\operatorname{Res}_{z=\infty} \frac{\partial_1 \ell(z) \cdot \partial_2 \ell(z)}{\ell^{\prime}(z)} d z-\operatorname{Res}_{z=\varphi} \frac{\partial_1 \ell(z) \cdot \partial_2 \ell(z)}{\ell^{\prime}(z)} d z
\end{equation}
for any $\partial_1, \partial_2 \in T_{\vec{a}} \mathcal{M}_{m, n}^s$.
\subsection{Flat coordinates}
In accordance with condition (3) in defintion \ref{deffm}, we can consider the inverse function of $w(z)=$ $\zeta(z)^{\frac{1}{s}}$ defined in a neighborhood of $\Gamma^{\prime}$. This function can be represented by the Laurent expansion:
\begin{equation}\label{texp}
z(w)=\sum_{i \in \mathbb{Z}} t_i w^i, \quad z \in \Gamma,
\end{equation}
and the coefficients $t_i$ are determined by the following expression:
$$
t_i=\frac{1}{2 \pi \mathrm{i}} \int_{\Gamma'} z(w) w^{-i-1} d w, \quad i \in \mathbb{Z} .
$$
In addition, we introduce two functions defined in punctured neighborhoods of $\infty$ and $\varphi$ respectively. These functions are given by:
$$
\begin{aligned}
	& \chi(z):=\ell(z)^{\frac{1}{m}}=z+\chi_1 z^{-1}+\chi_2 z^{-2}+\cdots, \quad z \rightarrow \infty, \\
	& \hat{\chi}(z):=\ell(z)^{\frac{1}{n}}=\hat{a}_{-n}^{\frac{1}{n}}(z-\varphi)^{-1}+\hat{\chi}_0+\hat{\chi}_1(z-\varphi)+\cdots, \quad z \rightarrow \varphi .
\end{aligned}
$$
The inverse functions of $\chi(z)$ and $\hat{\chi}(z)$ can be expressed as follows:
\begin{align}
	& z(\chi)=\chi+h_1 \chi^{-1}+h_2 \chi^{-2}+\cdots+h_{m-1} \chi^{-m+1}+\cdots, \quad z \rightarrow \infty ,\label{hexp}\\
	& z(\hat{\chi})=\hat{h}_0+\hat{h}_1 \hat{\chi}^{-1}+\hat{h}_2 \hat{\chi}^{-2}+\cdots+\hat{h}_n \hat{\chi}^{-n}+\cdots, \quad z \rightarrow \varphi .\label{hhexp}
\end{align}
The variables $\mathbf{t} \cup \mathbf{h} \cup \hat{\mathbf{h}}=\left\{t_i\right\}_{i \in \mathbb{Z}} \cup\left\{h_j\right\}_{1 \leq j \leq m-1} \cup\{\hat{h}_k\}_{0 \leq k \leq n}$ determine $\zeta(z)$ and $\ell(z)$, and one has
$$
\begin{aligned}
	& \frac{\partial \zeta(z)}{\partial t_i}=-\zeta(z)^{\frac{1}{s}} \zeta^{\prime}(z), \quad \frac{\partial \ell(z)}{\partial t_i}=0, \\
	& \frac{\partial \zeta(z)}{\partial h_j}=0, \quad \frac{\partial \ell(z)}{\partial h_j}=\left(\ell^{\prime}(z) \chi(z)^{-j}\right)_{+}, \\
	& \frac{\partial \zeta(z)}{\partial \hat{h}_k}=0, \quad \frac{\partial \ell(z)}{\partial \hat{h}_k}=-\left(\ell^{\prime}(z) \hat{\chi}(z)^{-k}\right)_{-},
\end{aligned}
$$
which imply
$$
\begin{gathered}
	\left\langle\frac{\partial}{\partial t_{i_1}}, \frac{\partial}{\partial t_{i_2}}\right\rangle_\eta=-s \delta_{-s, i_1+i_2}, \\
	\left\langle\frac{\partial}{\partial h_{j_1}}, \frac{\partial}{\partial h_{j_2}}\right\rangle_\eta=m \delta_{m, j_1+j_2}, \\
	\left\langle\frac{\partial}{\partial \hat{h}_{k_1}}, \frac{\partial}{\partial \hat{h}_{k_2}}\right\rangle_\eta=n \delta_{n, k_1+k_2}, \\
	\left\langle\frac{\partial}{\partial t_i}, \frac{\partial}{\partial h_j}\right\rangle_\eta=\left\langle\frac{\partial}{\partial t_i}, \frac{\partial}{\partial \hat{h}_k}\right\rangle_\eta=\left\langle\frac{\partial}{\partial h_j}, \frac{\partial}{\partial \hat{h}_k}\right\rangle_\eta=0,
\end{gathered}
$$
where $i, i_1, i_2 \in \mathbb{Z}, j, j_1, j_2 \in\{1,2, \cdots, m-1\}$, and $k, k_1, k_2 \in\{0,1,2, \cdots, n\}$. Please see [13] for details.
\subsection{The Levi-Civita connection}
Let us now proceed to derive the explicit formula for the Levi-Civita connection associated with the metric $\langle\ ,\ \rangle_\eta$. Given vector fields $\partial_1$ and $\partial_2$ on $\mathcal{M}_{m, n}^s$, we consider the derivative of $\partial_2$ along $\partial_1$ that has the form:
\begin{equation}
	\begin{aligned}\label{conn}
		\left(\nabla_{\partial_1} \partial_2\right) \cdot \vec{a}= & \left(\partial_1 \partial_2 a(z)-\left[\left(\frac{\partial_1 \zeta(z) \partial_2 \zeta(z)}{\zeta^{\prime}(z)}\right)_{-}^{\prime}+\left(\frac{\partial_1 \ell(z) \partial_2 \ell(z)}{\ell^{\prime}(z)}\right)_{\infty,+}^{\prime}+\left(\frac{\partial_1 \ell(z) \partial_2 \ell(z)}{\ell^{\prime}(z)}\right)_{\varphi,-}^{\prime}\right]\right., \\
		& \ \ \left.\partial_1 \partial_2 \hat{a}(z)+\left(\frac{\partial_1 \zeta(z) \partial_2 \zeta(z)}{\zeta^{\prime}(z)}\right)_{+}^{\prime}-\left(\frac{\partial_1 \ell(z) \partial_2 \ell(z)}{\ell^{\prime}(z)}\right)_{\infty,+}^{\prime}-\left(\frac{\partial_1 \ell(z) \partial_2 \ell(z)}{\ell^{\prime}(z)}\right)_{\varphi,-}^{\prime}\right) .
	\end{aligned}
\end{equation}
Here, $(f(z))_{\varphi,-}$ denotes the negative part of the Laurent expansion of the rational function $f(z)$ at $\varphi$.
\begin{theorem}
	The connection $\nabla$ defined by \eqref{conn} is torsion-free and compatible with the metric $\langle\ ,\ \rangle_\eta$.
\end{theorem}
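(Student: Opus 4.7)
The plan is to verify both defining properties of the Levi-Civita connection directly from the formula \eqref{conn}. Throughout write $A=\partial_1\zeta\partial_2\zeta/\zeta'$ and $B=\partial_1\ell\partial_2\ell/\ell'$. Torsion-freeness is immediate: each of the correction terms $A_-',B_{\infty,+}',B_{\varphi,-}'$ in the $a$-component of $\nabla_{\partial_1}\partial_2$, and the analogous ones in the $\hat{a}$-component, is manifestly symmetric under $\partial_1\leftrightarrow\partial_2$, so
$$(\nabla_{\partial_1}\partial_2-\nabla_{\partial_2}\partial_1)\cdot\vec{a}=\bigl(\partial_1\partial_2a-\partial_2\partial_1a,\ \partial_1\partial_2\hat{a}-\partial_2\partial_1\hat{a}\bigr)=[\partial_1,\partial_2]\cdot\vec{a}.$$

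For metric compatibility, the first step is to derive the companion formulas
$$(\nabla_{\partial_1}\partial_2)\zeta=\partial_1\partial_2\zeta-A',\qquad (\nabla_{\partial_1}\partial_2)\ell=\partial_1\partial_2\ell-B'.$$
The $\zeta$-formula follows by subtracting the two components of \eqref{conn}: the $B$-corrections cancel and $A_-'+A_+'=A'$. For the $\ell$-formula, I would first establish that $v(a_+)=(va)_+$ and $v(\hat{a}_-)=(v\hat{a})_-$ for any tangent vector $v$, which expresses the compatibility of the $(z-\varphi)$-splitting with the action of $v$ and can be checked in the flat coordinates of section 2.3. Taking $v=\nabla_{\partial_1}\partial_2$, the $A$-corrections drop out, because $A_-'$ is strictly negative and $A_+'$ strictly non-negative in $(z-\varphi)$, while the $B$-corrections add to $B_{\infty,+}'+B_{\varphi,-}'=B'$, since the rational function $B$ has poles only at $\infty$ and $\varphi$ and thus decomposes as $B_{\infty,+}+B_{\varphi,-}+\mathrm{const}$ on $\mathbb{P}^1$.

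Now substitute into \eqref{metric} and compute $\partial_1\langle\partial_2,\partial_3\rangle_\eta$ by passing $\partial_1$ through the contour integral and the two residues. The interchange with $\mathrm{Res}_{z=\varphi}$ is legitimate even though the pole moves, because the extra term $\partial_1\varphi\cdot\mathrm{Res}_\varphi(\partial_z g)dz$ is zero (residue of a derivative). After the matching second-derivative terms cancel, the desired identity reduces to the single statement
$$\int_\Gamma\frac{A'\,\partial_3\zeta+A_{13}'\,\partial_2\zeta}{\zeta'}\,dz=\int_\Gamma\frac{\partial_2\zeta\,\partial_3\zeta\,\partial_1\zeta'}{(\zeta')^2}\,dz,\qquad A_{13}=\partial_1\zeta\partial_3\zeta/\zeta',$$
together with its $\ell$-analogues for $\mathrm{Res}_\infty$ and $\mathrm{Res}_\varphi$. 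Expanding $A'$ and $A_{13}'$ with the product rule, a short calculation shows the difference of the two integrands is exactly $\frac{d}{dz}\bigl(\partial_1\zeta\partial_2\zeta\partial_3\zeta/(\zeta')^2\bigr)$, a total $z$-derivative. Its integral over the closed contour $\Gamma$ vanishes, and the analogous $\ell$-quantities are residues of total derivatives and therefore also vanish.

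The main obstacle is establishing the intermediate formula $(\nabla_{\partial_1}\partial_2)\ell=\partial_1\partial_2\ell-B'$ and the accompanying identity $v(a_+)=(va)_+$: once the bookkeeping of the $\pm$-splittings under the manifold action is secure, the entire compatibility proof collapses into three applications of the total-derivative identity above.
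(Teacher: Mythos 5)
Your overall route is the same as the paper's: torsion-freeness is read off from the symmetry of the correction terms in $\partial_1\leftrightarrow\partial_2$, and compatibility is reduced, via the companion formulas for $(\nabla_{\partial_1}\partial_2)\zeta$ and $(\nabla_{\partial_1}\partial_2)\ell$, to the identity
$$
\int_\Gamma \frac{A'\,\partial_3\zeta+A_{13}'\,\partial_2\zeta}{\zeta'}\,dz=\int_\Gamma\frac{\partial_2\zeta\,\partial_3\zeta\,\partial_1\zeta'}{(\zeta')^2}\,dz
$$
together with its residue analogues. Your observation that the difference of the two integrands is the exact derivative $\bigl(\partial_1\zeta\,\partial_2\zeta\,\partial_3\zeta/(\zeta')^2\bigr)'$ is precisely the paper's double integration by parts packaged into one step, and it checks out; your remark about differentiating $\operatorname{Res}_{z=\varphi}$ through the moving pole is a point the paper leaves implicit and is correctly handled.

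There is, however, one genuinely false step: the claim that $B=\partial_1\ell\,\partial_2\ell/\ell'$ has poles only at $\infty$ and $\varphi$ and hence decomposes as $B_{\infty,+}+B_{\varphi,-}+\mathrm{const}$. The factor $1/\ell'$ has poles at the zeros of $\ell'$, i.e.\ at the critical points of $\ell$ in $\mathbb{P}^1\setminus\{\varphi,\infty\}$ (only their absence from $\Gamma$ is guaranteed by the definition of $\mathcal{M}_{m,n}^s$), so the partial-fraction decomposition you invoke is missing those principal parts and the identity $B_{\infty,+}'+B_{\varphi,-}'=B'$ fails. Accordingly, $(\nabla_{\partial_1}\partial_2)\ell=\partial_1\partial_2\ell-B'$ is not correct as a statement about functions; the computation you outline via $v(a_+)=(va)_+$ actually yields $(\nabla_{\partial_1}\partial_2)\ell=\partial_1\partial_2\ell-B_{\infty,+}'-B_{\varphi,-}'$, since $(B_{\infty,+}')_+=B_{\infty,+}'$ and $(B_{\varphi,-}')_-=B_{\varphi,-}'$ while the cross projections vanish — and this is the form the paper uses. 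The gap is repairable: inside $\operatorname{Res}_{z=\infty}$ and $\operatorname{Res}_{z=\varphi}$ one may replace $B_{\infty,+}'+B_{\varphi,-}'$ by the locally expanded $B'$, because $B-B_{\infty,+}-B_{\varphi,-}$ is regular at $\varphi$ and $O(z^{-1})$ at $\infty$, while $\partial_3\ell/\ell'$ is $O(1)$ at $\varphi$ and $O(z^{-1})$ at $\infty$, so the discrepancy contributes to neither residue. This local order-counting argument must replace the global decomposition claim, which as written is wrong.
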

\begin{proof}
	The torsion-free property of the connection $\nabla$ is equivalent to the identity:
	$$
	\left(\nabla_{\partial_1} \partial_2\right) \cdot \vec{a}-\left(\nabla_{\partial_2} \partial_1\right) \cdot \vec{a}=\partial_1 \partial_2 \vec{a}-\partial_2 \partial_1 \vec{a},
	$$
	which can be readily observed from formula \eqref{conn}.
	
	The condition for the connection $\nabla$ to be compatible with the metric $\langle\ ,\ \rangle_\eta$ is expressed as follows:
	\begin{equation}\label{compa}
		\nabla_{\partial_1}\left\langle\partial_2, \partial_3\right\rangle_\eta=\left\langle\nabla_{\partial_1} \partial_2, \partial_3\right\rangle_\eta+\left\langle\partial_2, \nabla_{\partial_1} \partial_3\right\rangle_\eta
	\end{equation}
for any vector field $\partial_1, \partial_2$ and $\partial_3$ on $\mathcal{M}_{m, n}^s$. To verify formula \eqref{compa}, we can apply the Leibniz rule and formula \eqref{metric}, which lead to
\begin{equation}
	\begin{aligned}\label{compfor1}
		& \nabla_{\partial_1}\left\langle\partial_2, \partial_3\right\rangle_\eta \\
		= & -\frac{1}{2 \pi \mathrm{i}} \int_{\Gamma}\left( \frac{\partial_1 \partial_2 \zeta(z) \cdot \partial_3 \zeta(z)}{\zeta^{\prime}(z)}+\frac{\partial_2 \zeta(z) \cdot \partial_1 \partial_3 \zeta(z)}{\zeta^{\prime}(z)} - \frac{\partial_2 \zeta(z) \cdot \partial_3 \zeta(z) \cdot \partial_1 \zeta^{\prime}(z)}{\left(\zeta^{\prime}(z)\right)^2}\right) d z \\
		& -\operatorname{Res}_{z=\infty}\left( \frac{\partial_1 \partial_2 \ell(z) \cdot \partial_3 \ell(z)}{\ell^{\prime}(z)} + \frac{\partial_2 \ell(z) \cdot \partial_1 \partial_3 \ell(z)}{\ell^{\prime}(z)} - \frac{\partial_2 \ell(z) \cdot \partial_3 \ell(z) \cdot \partial_1 \ell^{\prime}(z)}{\left(\ell^{\prime}(z)\right)^2} \right)d z \\
		& -\operatorname{Res}_{z=\varphi} \left(\frac{\partial_1 \partial_2 \ell(z) \cdot \partial_3 \ell(z)}{\ell^{\prime}(z)} + \frac{\partial_2 \ell(z) \cdot \partial_1 \partial_3 \ell(z)}{\ell^{\prime}(z)}- \frac{\partial_2 \ell(z) \cdot \partial_3 \ell(z) \cdot \partial_1 \ell^{\prime}(z)}{\left(\ell^{\prime}(z)\right)^2} \right)d z .
	\end{aligned}
\end{equation}
By using the commutativity of $\partial_\nu$ and $\partial_z$ for $\nu=1,2$ and applying integration by parts, we can derive the following results:
\begin{equation}
	\begin{aligned}\label{compfor2}
		& \frac{1}{2 \pi \mathrm{i}} \int_{\Gamma} \frac{\partial_2 \zeta(z) \cdot \partial_3 \zeta(z) \cdot \partial_1 \zeta^{\prime}(z)}{\left(\zeta^{\prime}(z)\right)^2} d z \\
		= & -\frac{1}{2 \pi \mathrm{i}} \int_{\Gamma}\left(\frac{\partial_2 \zeta(z)}{\zeta^{\prime}(z)}\right)^{\prime} \frac{\partial_3 \zeta(z) \cdot \partial_1 \zeta(z)}{\zeta^{\prime}(z)} d z-\frac{1}{2 \pi \mathrm{i}} \int_{\Gamma}\left(\frac{\partial_3 \zeta(z)}{\zeta^{\prime}(z)}\right)^{\prime} \frac{\partial_2 \zeta(z) \cdot \partial_1 \zeta(z)}{\zeta^{\prime}(z)} d z \\
		= & \frac{1}{2 \pi \mathrm{i}} \int_{\Gamma} \frac{\partial_2 \zeta(z)}{\zeta^{\prime}(z)}\left(\frac{\partial_3 \zeta(z) \cdot \partial_1 \zeta(z)}{\zeta^{\prime}(z)}\right)^{\prime} d z+\frac{1}{2 \pi \mathrm{i}} \int_{\Gamma} \frac{\partial_3 \zeta(z)}{\zeta^{\prime}(z)}\left(\frac{\partial_2 \zeta(z) \cdot \partial_1 \zeta(z)}{\zeta^{\prime}(z)}\right)^{\prime} d z,
	\end{aligned}
\end{equation}
and
\begin{equation}
	\begin{aligned}\label{compfor3}
		& \operatorname{Res}_{z=\infty} \frac{\partial_2 \ell(z) \cdot \partial_3 \ell(z) \cdot \partial_1 \ell^{\prime}(z)}{\left(\ell^{\prime}(z)\right)^2} d z \\
		= & \operatorname{Res}_{z=\infty} \frac{\partial_2 \ell(z)}{\ell^{\prime}(z)}\left(\frac{\partial_3 \ell(z) \cdot \partial_1 \ell(z)}{\left(\ell^{\prime}(z)\right)^2}\right)^{\prime} d z+\operatorname{Res}_{z=\infty} \frac{\partial_3 \ell(z)}{\ell^{\prime}(z)}\left(\frac{\partial_2 \ell(z) \cdot \partial_1 \ell(z)}{\left(\ell^{\prime}(z)\right)^2}\right)^{\prime} d z,
	\end{aligned}
\end{equation}
and
\begin{equation}
	\begin{aligned}\label{compfor4}
		& \operatorname{Res}_{z=\varphi} \frac{\partial_2 \ell(z) \cdot \partial_3 \ell(z) \cdot \partial_1 \ell^{\prime}(z)}{\left(\ell^{\prime}(z)\right)^2} d z \\
		= & \operatorname{Res}_{z=\varphi} \frac{\partial_2 \ell(z)}{\ell^{\prime}(z)}\left(\frac{\partial_3 \ell(z) \cdot \partial_1 \ell(z)}{\left(\ell^{\prime}(z)\right)^2}\right)^{\prime} d z+\operatorname{Res}_{z=\varphi} \frac{\partial_3 \ell(z)}{\ell^{\prime}(z)}\left(\frac{\partial_2 \ell(z) \cdot \partial_1 \ell(z)}{\left(\ell^{\prime}(z)\right)^2}\right)^{\prime} d z .
	\end{aligned}
\end{equation}
Substituting \eqref{compfor2}-\eqref{compfor4} into \eqref{compfor1}, one has
\begin{equation}
\begin{aligned}\label{compfor5}
	& \nabla_{\partial_1}\left\langle\partial_2, \partial_3\right\rangle_\eta \\
	=&-\frac{1}{2 \pi \mathrm{i}} \int_{\Gamma}\left( \frac{\partial_1 \partial_2 \zeta(z) \cdot \partial_3 \zeta(z)}{\zeta^{\prime}(z)} +  \frac{\partial_2 \zeta(z) \cdot \partial_1 \partial_3 \zeta(z)}{\zeta^{\prime}(z)} -  \frac{\partial_2 \zeta(z)}{\zeta^{\prime}(z)}\left(\frac{\partial_3 \zeta(z) \cdot \partial_1 \zeta(z)}{\zeta^{\prime}(z)}\right)^{\prime} - \frac{\partial_3 \zeta(z)}{\zeta^{\prime}(z)}\left(\frac{\partial_2 \zeta(z) \cdot \partial_1 \zeta(z)}{\zeta^{\prime}(z)}\right)^{\prime} \right) dz\\
	&-\operatorname{Res}_{z=\infty} \left(\frac{\partial_1 \partial_2 \ell(z) \cdot \partial_3 \ell(z)}{\ell^{\prime}(z)} + \frac{\partial_2 \ell(z) \cdot \partial_1 \partial_3 \ell(z)}{\ell^{\prime}(z)} - \frac{\partial_2 \ell(z)}{\ell^{\prime}(z)}\left(\frac{\partial_3 \ell(z) \cdot \partial_1 \ell(z)}{\left(\ell^{\prime}(z)\right)^2}\right)^{\prime}- \frac{\partial_3 \ell(z)}{\ell^{\prime}(z)}\left(\frac{\partial_2 \ell(z) \cdot \partial_1 \ell(z)}{\left(\ell^{\prime}(z)\right)^2}\right)^{\prime}\right) d z \\
	&-\operatorname{Res}_{z=\varphi}\left( \frac{\partial_1 \partial_2 \ell(z) \cdot \partial_3 \ell(z)}{\ell^{\prime}(z)} + \frac{\partial_2 \ell(z) \cdot \partial_1 \partial_3 \ell(z)}{\ell^{\prime}(z)} - \frac{\partial_2 \ell(z)}{\ell^{\prime}(z)}\left(\frac{\partial_3 \ell(z) \cdot \partial_1 \ell(z)}{\left(\ell^{\prime}(z)\right)^2}\right)^{\prime} - \frac{\partial_3 \ell(z)}{\ell^{\prime}(z)}\left(\frac{\partial_2 \ell(z) \cdot \partial_1 \ell(z)}{\left(\ell^{\prime}(z)\right)^2}\right)^{\prime}\right) d z .
\end{aligned}
\end{equation}
On the other hand, formula \eqref{conn} can be equivalently expressed as:
\begin{equation}\label{compfor6}
	\left(\nabla_{\partial_1} \partial_2\right) \cdot \zeta(z)=\partial_1 \partial_2 \zeta(z)-\left(\frac{\partial_1 \zeta(z) \partial_2 \zeta(z)}{\zeta^{\prime}(z)}\right)^{\prime},
\end{equation}
\begin{equation}\label{compfor7}
	\left(\nabla_{\partial_1} \partial_2\right) \cdot \ell(z)=\partial_1 \partial_2 \ell(z)-\left(\frac{\partial_1 \ell(z) \partial_2 \ell(z)}{\ell^{\prime}(z)}\right)_{\infty,+}^{\prime}-\left(\frac{\partial_1 \ell(z) \partial_2 \ell(z)}{\ell^{\prime}(z)}\right)_{\varphi,-}^{\prime} .
\end{equation}
By combining \eqref{metric}, \eqref{compfor5}, \eqref{compfor6}, and \eqref{compfor7}, we can deduce that formula \eqref{compa} holds. The theorem is proved.
\end{proof}

\begin{corollary}
	The explicit expression for the covariant derivative of a one-form field $\vec{\omega}=(\omega(z), \hat{\omega}(z))$ along a vector field $\partial$ on $\mathcal{M}_{m, n}^s$ is given by:
	\begin{equation}
		\begin{aligned}\label{dulconn}
			\nabla_{\partial} \vec{\omega}= & \left(\left(\partial \omega(z)-\left(\omega^{\prime}(z)_{+}-\hat{\omega}^{\prime}(z)_{-}\right) \frac{\partial \zeta(z)}{\zeta^{\prime}(z)}-\left(\omega^{\prime}(z)_{-}+\hat{\omega}^{\prime}(z)_{-}\right) \frac{\partial \ell(z)}{a^{\prime}(z)}\right)_{\geq-m+1}\right. \\
			& \left.\left(\partial \hat{\omega}(z)+\left(\omega^{\prime}(z)_{+}-\hat{\omega}^{\prime}(z)_{-}\right) \frac{\partial \zeta(z)}{\zeta^{\prime}(z)}-\left(\omega^{\prime}(z)_{+}+\hat{\omega}^{\prime}(z)_{+}\right) \frac{\partial \ell(z)}{\hat{a}^{\prime}(z)}\right)_{\leq n}\right) .
		\end{aligned}
	\end{equation}

\end{corollary}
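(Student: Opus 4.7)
The plan is to obtain \eqref{dulconn} by dualizing the connection on vector fields established in \eqref{conn}. Since the preceding theorem shows that $\nabla$ is metric-compatible, the Leibniz rule applied to the natural pairing \eqref{pair} forces
$$\langle \nabla_\partial \vec{\omega}, \vec{\xi}\rangle = \partial \langle \vec{\omega}, \vec{\xi}\rangle - \langle \vec{\omega}, \nabla_\partial \vec{\xi}\rangle$$
for every vector field $\vec{\xi}$ on $\mathcal{M}_{m,n}^s$. Non-degeneracy of the pairing between $T^*_{\vec{a}}\mathcal{M}_{m,n}^s$ and $T_{\vec{a}}\mathcal{M}_{m,n}^s$ then determines $\nabla_\partial \vec{\omega}$ uniquely inside the cotangent space, and the truncations $(\cdot)_{\geq -m+1}$ and $(\cdot)_{\leq n}$ appearing in the statement are precisely those that enforce membership in $T^*_{\vec{a}}\mathcal{M}_{m,n}^s$ as described in \eqref{cotans}.

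Concretely, I would first differentiate the defining integral in \eqref{pair} under $\partial$ and subtract $\langle \vec{\omega}, \nabla_\partial \vec{\xi}\rangle$ as read off from \eqref{conn}, using the substitutions $\partial_2 \zeta = \xi - \hat{\xi}$ and $\partial_2 \ell = \xi_+ + \hat{\xi}_-$ where $\partial_2$ plays the role of $\vec{\xi}$. The pieces $\omega\,\partial\xi$ and $\hat{\omega}\,\partial\hat{\xi}$ cancel against the $\partial_1\partial_2 a$ and $\partial_1\partial_2 \hat{a}$ terms of \eqref{conn}, leaving contour integrals whose integrands couple $\omega,\hat{\omega}$ to derivatives of $\xi,\hat{\xi}$ sitting inside various Laurent projections. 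Integration by parts on $\Gamma$ (for the $\zeta$-integral) and residue manipulations at $\infty$ and $\varphi$ (for the two $\ell$-integrals) then transfer every derivative off of $\vec{\xi}$ and onto $\vec{\omega}$, rewriting the whole expression as $\frac{1}{2\pi\mathrm{i}}\int_\Gamma (X \xi + \hat{X}\hat{\xi})\,dz$ for explicit Laurent series $X, \hat{X}$ in $\omega', \hat{\omega}'$ and the connection data $\partial\zeta/\zeta'$, $\partial\ell/a'$, $\partial\ell/\hat{a}'$.

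Identifying $X$ and $\hat{X}$ with the two components of $\nabla_\partial \vec{\omega}$ is where I expect the main difficulty, as it demands careful bookkeeping among the projections $(\cdot)_\pm$, $(\cdot)_{\infty,+}$, $(\cdot)_{\varphi,-}$. The key simplifications are the splittings $\zeta = a - \hat{a}$ and $\ell = a_+ + \hat{a}_-$, together with the observation that $a'$ agrees with $\ell'$ modulo pieces killed by the $\varphi,-$ projection (and similarly $\hat{a}'$ with $\ell'$ modulo pieces killed by the $\infty,+$ projection), which converts the raw output into the combinations $\bigl(\omega'_+ - \hat{\omega}'_-\bigr)\partial\zeta/\zeta'$ and $\bigl(\omega'_\pm + \hat{\omega}'_\pm\bigr)\partial\ell/a'$ (resp.\ $/\hat{a}'$) appearing in \eqref{dulconn}. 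Finally, the truncations $(\cdot)_{\geq -m+1}$ and $(\cdot)_{\leq n}$ are legitimate because any Laurent tail outside these ranges pairs trivially with every $\vec{\xi} \in T_{\vec{a}}\mathcal{M}_{m,n}^s$; that this truncation already appeared in \eqref{cov1}--\eqref{cov2} for $\eta^{-1}$ serves as a useful consistency check that $\nabla_\partial \vec{\omega}$ indeed lies in $T^*_{\vec{a}}\mathcal{M}_{m,n}^s$.
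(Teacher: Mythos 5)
Your proposal matches the paper's proof: the paper establishes \eqref{dulconn} by verifying $\partial_1\langle\vec{\omega},\partial_2\rangle=\langle\nabla_{\partial_1}\vec{\omega},\partial_2\rangle+\langle\vec{\omega},\nabla_{\partial_1}\partial_2\rangle$ through exactly the computation you outline — differentiating the pairing \eqref{pair}, substituting \eqref{conn}, integrating by parts on $\Gamma$ and manipulating residues at $\infty$ and $\varphi$ to move derivatives onto $\vec{\omega}$, with the truncations $(\cdot)_{\geq-m+1}$ and $(\cdot)_{\leq n}$ justified as you say. One minor correction: the displayed Leibniz identity is the \emph{definition} of the induced connection on one-forms via the natural pairing \eqref{pair}, not a consequence of metric compatibility (which concerns $\langle\ ,\ \rangle_\eta$ rather than the pairing), but this does not affect the argument.
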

\begin{proof}
	Let us verify the formula:
	\begin{equation}
		\partial_1\left\langle\vec{\omega}, \partial_2\right\rangle=\left\langle\nabla_{\partial_1} \vec{\omega}, \partial_2\right\rangle+\left\langle\vec{\omega}, \nabla_{\partial_1} \partial_2\right\rangle .
	\end{equation}
By the use of \eqref{pair} and \eqref{conn}, one has:
$$
\partial_1\left\langle\vec{\omega}, \partial_2\right\rangle=\int\left(\partial_1 \omega(z) \partial_2 a(z)+\omega(z) \partial_1 \partial_2 a(z)+\partial_1 \hat{\omega}(z) \partial_2 \hat{a}(z)+\hat{\omega}(z) \partial_1 \partial_2 \hat{a}(z)\right) d z,
$$
and
$$
\begin{aligned}
	& \left\langle\vec{\omega}, \nabla_{\partial_1} \partial_2\right\rangle_\eta \\
	= & \int_{\Gamma}\left(\omega(z) \partial_1 \partial_2 a(z)-\omega(z)\left(\frac{\partial_1 \zeta(z) \partial_2 \zeta(z)}{\zeta^{\prime}(z)}\right)_{-}^{\prime}-\omega(z)\left(\frac{\partial_1 \ell(z) \partial_2 \ell(z)}{\ell^{\prime}(z)}\right)_{\infty,+}^{\prime}-\omega(z)\left(\frac{\partial_1 \ell(z) \partial_2 \ell(z)}{\ell^{\prime}(z)}\right)_{\varphi,-}^{\prime}\right) d z \\
	& +\int_{\Gamma}\left(\hat{\omega}(z) \partial_1 \partial_2 \hat{a}(z)+\hat{\omega}(z)\left(\frac{\partial_1 \zeta(z) \partial_2 \zeta(z)}{\zeta^{\prime}(z)}\right)_{-}^{\prime}-\hat{\omega}(z)\left(\frac{\partial_1 \ell(z) \partial_2 \ell(z)}{\ell^{\prime}(z)}\right)_{\infty,+}^{\prime}-\hat{\omega}(z)\left(\frac{\partial_1 \ell(z) \partial_2 \ell(z)}{\ell^{\prime}(z)}\right)_{\varphi,-}^{\prime}\right) d z \\
	= & \int_{\Gamma}\left(\omega(z) \partial_1 \partial_2 a(z)+\hat{\omega}(z) \partial_1 \partial_2 \hat{a}(z)+\omega^{\prime}(z)_{+}\left(\frac{\partial_1 \zeta(z) \partial_2 \zeta(z)}{\zeta^{\prime}(z)}\right)-\hat{\omega}^{\prime}(z)_{-}\left(\frac{\partial_1 \zeta(z) \partial_2 \zeta(z)}{\zeta^{\prime}(z)}\right)\right) d z \\
	& -\operatorname{Res}_{\infty}\left(\omega^{\prime}(z)_{-}\left(\frac{\partial_1 \ell(z) \partial_2 \ell(z)}{a^{\prime}(z)}\right)+\hat{\omega}^{\prime}(z)_{-}\left(\frac{\partial_1 \ell(z) \partial_2 \ell(z)}{a^{\prime}(z)}\right)\right) d z \\
	& +\operatorname{Res}_{\varphi}\left(\omega^{\prime}(z)_{+}\left(\frac{\partial_1 \ell(z) \partial_2 \ell(z)}{\hat{a}^{\prime}(z)}\right)+\hat{\omega}^{\prime}(z)_{+}\left(\frac{\partial_1 \ell(z) \partial_2 \ell(z)}{\hat{a}^{\prime}(z)}\right)\right) d z .
\end{aligned}
$$
Hence
$$
\begin{aligned}
	& \partial_1\left\langle\vec{\omega}, \partial_2\right\rangle-\left\langle\vec{\omega}, \nabla_{\partial_1} \partial_2\right\rangle \\
	= & \int_{\Gamma}\left(\partial_1 \omega(z)-\omega^{\prime}(z)_{+} \frac{\partial_1 \zeta(z)}{\zeta^{\prime}(z)}+\hat{\omega}^{\prime}(z)_{-} \frac{\partial_1 \zeta(z)}{\zeta^{\prime}(z)}\right) \partial_2 a(z) d z+\operatorname{Res}_{\infty}\left(\left(\omega^{\prime}(z)_{-}+\hat{\omega}^{\prime}(z)-\right) \frac{\partial_1 \ell(z)}{a^{\prime}(z)}\right) \partial_2 a(z) d z \\
	& +\int_{\Gamma}\left(\partial_1 \hat{\omega}(z)+\omega^{\prime}(z)_{+} \frac{\partial_1 \zeta(z)}{\zeta^{\prime}(z)}-\hat{\omega}^{\prime}(z)_{-} \frac{\partial_1 \zeta(z)}{\zeta^{\prime}(z)}\right) \partial_2 \hat{a}(z) d z-\operatorname{Res}_{\varphi}\left(\left(\omega^{\prime}(z)_{+}+\hat{\omega}^{\prime}(z)_{+}\right) \frac{\partial_1 \ell(z)}{\hat{a}^{\prime}(z)}\right) \partial_2 \hat{a}(z) d z \\
	= & \left\langle\nabla_{\partial_1} \vec{\omega}, \partial_2\right\rangle
\end{aligned}
$$
for $\nabla_{\partial_1} \vec{\omega}$ defined by \eqref{dulconn}. The corollary is proved.
\end{proof}
\subsection{The associative product and Euler vector field}
The Frobenius manifolds are endowed with an associative commutative product on each tangent space. For any $\vec{\omega}_1=\left(\omega_1(z), \hat{\omega}_1(z)\right), \vec{\omega}_2=\left(\omega_2(z), \hat{\omega}_2(z)\right) \in T_{\vec{a}}^* \mathcal{M}_{m, n}^s$, the product $*$ on cotangent space $T^* \mathcal{M}_{m, n}^s$ can be represented as follows:
\begin{equation}
	\begin{aligned}\label{corpro}
		\vec{\omega}_1 * \vec{\omega}_2= & \left(\left[\omega_2(z)\left(\omega_1(z) a^{\prime}(z)\right)_{+}-\omega_1(z)\left(\omega_2(z) a^{\prime}(z)\right)_{-}-\omega_2(z)\left(\hat{\omega}_1(z) \hat{a}^{\prime}(z)\right)_{-}-\omega_1(z)\left(\hat{\omega}_2(z) \hat{a}^{\prime}(z)\right)_{-}\right]_{\geq-m+1},\right. \\
		& {\left.\left[\hat{\omega}_2(z)\left(\hat{\omega}_1(z) \hat{a}^{\prime}(z)\right)_{+}-\hat{\omega}_1(z)\left(\hat{\omega}_2(z) \hat{a}^{\prime}(z)\right)_{-}+\hat{\omega}_1(z)\left(\omega_2(z) a^{\prime}(z)\right)_{+}+\hat{\omega}_2(z)\left(\omega_1(z) a^{\prime}(z)\right)_{+}\right]_{\leq n}\right) . }
	\end{aligned}
\end{equation}
The bijection $\eta: T^* \mathcal{M}_{m, n}^s \rightarrow T \mathcal{M}_{m, n}^s$ induces an associative product $\circ$ on $T \mathcal{M}_{m, n}^s$, satisfing
\begin{equation*}
\vec{\xi}_1 \circ \vec{\xi}_2:=\eta\left(\eta^{-1}\left(\vec{\xi}_1\right) * \eta^{-1}\left(\vec{\xi}_2\right)\right), \quad \xi_1, \xi_2 \in T \mathcal{M}_{m, n}^s .
\end{equation*}
The unity vectors for the product $*$ and $\circ$ are given by
\begin{equation}
\vec{e}^*=\left(\frac{1}{m}(z-\varphi)^{-m+1}, 0\right) \in T_{\vec{a}}^* \mathcal{M}_{m, n}^s,
\end{equation}
and
\begin{equation}
\vec{e}= \begin{cases}\frac{1}{m} \frac{\partial}{\partial h_{m-1}}, & m \geq 2, \\ \frac{\partial}{\partial t_{s-1}}+\frac{\partial}{\partial \hat{h}_0}, & m=1,\end{cases}
\end{equation}
respectively.

Given a tangent vector $\vec{\xi} \in T_{\vec{a}} \mathcal{M}_{m, n}^s$, the multiplication by $\eta^{-1}(\vec{\xi})$ induces a linear map $C_{\vec{\xi}}$ : $T_{\vec{a}}^* \mathcal{M}_{m, n}^s \rightarrow T_{\vec{a}}^* \mathcal{M}_{m, n}^s$ that sends $\vec{\omega}$ to $\eta^{-1}(\vec{\xi}) * \vec{\omega}$. We will now proceed to deduce the explicit form of this map.
\begin{lemma}
	Let $\partial$ be a vector field and $\vec{\omega}=(\omega(z), \hat{\omega}(z))$ be a one-form field on $\mathcal{M}_{m, n}^s$. The explicit form of the operator $C_{\partial}$ defined above is given by
	\begin{equation}
		\begin{aligned}\label{dulcorpro}
			C_{\partial}(\vec{\omega})= & \left(\left(\left(\frac{\partial a(z)}{a^{\prime}(z)}-\frac{\partial \zeta(z)}{\zeta^{\prime}(z)}\right) a^{\prime}(z) \omega(z)-\left(\frac{\partial a(z)}{a^{\prime}(z)}-\frac{\partial \zeta(z)}{\zeta^{\prime}(z)}\right)\left(\omega(z) a^{\prime}(z)+\hat{\omega}(z) \hat{a}^{\prime}(z)\right)_{-}\right)_{\geq-m+1},\right. \\
			& \left.\left(\left(\frac{\partial \hat{a}(z)}{\hat{a}^{\prime}(z)}-\frac{\partial \zeta(z)}{\zeta^{\prime}(z)}\right) \hat{a}^{\prime}(z) \hat{\omega}(z)-\left(\frac{\partial \hat{a}(z)}{\hat{a}^{\prime}(z)}-\frac{\partial \zeta(z)}{\zeta^{\prime}(z)}\right)\left(\omega(z) a^{\prime}(z)+\hat{\omega}(z) \hat{a}^{\prime}(z)\right)_{+}\right)_{\leq n}\right) .
		\end{aligned}
	\end{equation}
\end{lemma}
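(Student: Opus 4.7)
The plan is direct substitution: by definition $C_\partial(\vec\omega)=\eta^{-1}(\partial)*\vec\omega$, so I set $\vec\omega_0:=\eta^{-1}(\partial)=(\omega_0,\hat\omega_0)$ and read off $\omega_0,\hat\omega_0$ from formulas \eqref{cov1}--\eqref{cov2} with $\xi=\partial a$, $\hat\xi=\partial\hat a$ (so that $\xi-\hat\xi=\partial\zeta$). The key observation is that
\[
\omega_0=(\tilde\omega_0)_{\geq-m+1},\qquad \hat\omega_0=(\tilde{\hat\omega}_0)_{\leq n},\qquad \tilde\omega_0:=\tfrac{\partial a}{a'}-\tfrac{\partial\zeta}{\zeta'},\quad \tilde{\hat\omega}_0:=-\tfrac{\partial\hat a}{\hat a'}+\tfrac{\partial\zeta}{\zeta'},
\]
where the identification uses that $\partial a/a'$ has maximum degree $-1$ in $(z-\varphi)$ (since $1/a'$ at $\varphi$ has maximum degree $-m+1$) and $\partial\hat a/\hat a'$ has minimum degree $0$ (since $1/\hat a'$ at $\varphi$ has minimum degree $n+1$). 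The crucial algebraic miracle is that these two untruncated series satisfy
\[
\tilde\omega_0\,a'+\tilde{\hat\omega}_0\,\hat a'=\partial\zeta-\tfrac{\partial\zeta}{\zeta'}(a'-\hat a')=\partial\zeta-\tfrac{\partial\zeta}{\zeta'}\zeta'=0.
\]

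With this setup, I substitute $\vec\omega_1=\vec\omega_0$, $\vec\omega_2=\vec\omega$ into \eqref{corpro} and argue that, under the outer projections $[\cdot]_{\geq-m+1}$ and $[\cdot]_{\leq n}$, one may freely replace $\omega_0\to\tilde\omega_0$ and $\hat\omega_0\to\tilde{\hat\omega}_0$. After this replacement and regrouping, the first slot reads
\[
\bigl[\omega\bigl((\tilde\omega_0 a')_+-(\tilde{\hat\omega}_0\hat a')_-\bigr)-\tilde\omega_0\,(\omega a'+\hat\omega\hat a')_-\bigr]_{\geq-m+1}.
\]
The miracle forces $(\tilde\omega_0 a')_\pm=-(\tilde{\hat\omega}_0\hat a')_\pm$, whence $(\tilde\omega_0 a')_+-(\tilde{\hat\omega}_0\hat a')_-=\tilde\omega_0 a'$, and the slot collapses to $[\tilde\omega_0(a'\omega-(\omega a'+\hat\omega\hat a')_-)]_{\geq-m+1}$, matching the first component of \eqref{dulcorpro}. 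Regrouping the second slot yields
\[
\bigl[\hat\omega\bigl((\tilde{\hat\omega}_0\hat a')_++(\tilde\omega_0 a')_+\bigr)+\tilde{\hat\omega}_0\bigl((\omega a')_+-(\hat\omega\hat a')_-\bigr)\bigr]_{\leq n},
\]
where the miracle kills the first bracket (as $(\tilde{\hat\omega}_0\hat a'+\tilde\omega_0 a')_+=0$), leaving $[\tilde{\hat\omega}_0((\omega a')_+-(\hat\omega\hat a')_-)]_{\leq n}$, which matches the second component of \eqref{dulcorpro}.

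The main obstacle is the bookkeeping around the truncation operators. The replacement $\omega_0\to\tilde\omega_0$ in the first slot is termwise: each correction is $(\tilde\omega_0)_{<-m+1}$ multiplied by an object of maximum degree $\leq m-1$ (namely $a'$, $(\omega a')_-$, or $(\hat\omega\hat a')_-$), forcing the product into degrees $<-m+1$ and hence killed by $[\cdot]_{\geq-m+1}$. The replacement $\hat\omega_0\to\tilde{\hat\omega}_0$ in the second slot is more delicate, since individual corrections can a priori have degrees $\leq n$ and are not killed termwise; one must group them first, whereupon the total correction reduces to $(\tilde{\hat\omega}_0)_{>n}\bigl(\omega a'+\hat\omega\hat a'\bigr)_+$, which lies in degrees $\geq n+1$ and is killed by $[\cdot]_{\leq n}$. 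Once these degree estimates are in hand, the remainder of the argument is the mechanical computation outlined above.
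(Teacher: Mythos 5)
Your argument is correct and follows essentially the same route as the paper: substitute the untruncated representative $(\tilde\omega_0,\tilde{\hat\omega}_0)$ of $\eta^{-1}(\partial)$ into \eqref{corpro}, exploit the identity $\tilde\omega_0 a'+\tilde{\hat\omega}_0\hat a'=0$, and observe that the truncations in \eqref{cov1}--\eqref{cov2} are immaterial (the paper merely asserts this invariance, whereas you verify it by degree counting, which is a welcome addition). One small slip in that bookkeeping: for the terms of the form $(\omega_1 a')_+$ the correction $(\tilde\omega_0)_{<-m+1}\,a'$ has degrees only $\leq -1$, not $<-m+1$, so it is annihilated by the inner projection $(\cdot)_+$ rather than by the outer $[\cdot]_{\geq-m+1}$; the conclusion is unaffected.
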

\begin{proof}
	Let us deduce formula \eqref{dulcorpro} from \eqref{cov1},\eqref{cov2} and \eqref{corpro}. Note that the right-hand side of \eqref{corpro} remains unchanged when an element from $(z-\varphi)^{-m} \mathcal{H}_{\varphi}^{-} \times(z-\varphi)^n \mathcal{H}_{\varphi}^{+}$ is added to $\left(\omega_1(z), \hat{\omega}_1(z)\right)$. According to \eqref{cov1} and \eqref{cov2}, we substitute
	$$
	\omega_1(z)=\left(\frac{\xi_1(z)}{a^{\prime}(z)}-\frac{\xi_1(z)-\hat{\xi}_1(z)}{\zeta^{\prime}(z)}\right), \quad \hat{\omega}_1(z)=\left(\frac{\xi_1(z)-\hat{\xi}_1(z)}{\zeta^{\prime}(z)}-\frac{\hat{\xi}_1(z)}{\hat{a}^{\prime}(z)}\right)
	$$
	into the right-hand side of \eqref{corpro}. By using the equality:
	$$
	\omega_1(z) a^{\prime}(z)+\hat{\omega}_1(z) \hat{a}^{\prime}(z)=0,
	$$
	we obtain
	$$
	\begin{aligned}
		C_{\partial_1} \vec{\omega}_2= & \left(\left(\omega_1(z) \omega_2(z) a^{\prime}(z)-\omega_1(z)\left(\omega_2(z) a^{\prime}(z)+\hat{\omega}_2(z) \hat{a}^{\prime}(z)\right)_{-}\right)_{\geq-m+1}\right. \\
		& \left.\left(-\hat{\omega}_1(z) \hat{\omega}_2(z) \hat{a}^{\prime}(z)+\hat{\omega}_1(z)\left(\omega_2(z) a^{\prime}(z)+\hat{\omega}_2(z) \hat{a}^{\prime}(z)\right)_{+}\right)_{\leq n}\right) \\
		= & \left(\left(\left(\frac{\partial_1 a(z)}{a^{\prime}(z)}-\frac{\partial_1 \zeta(z)}{\zeta^{\prime}(z)}\right) a^{\prime}(z) \omega_2(z)-\left(\frac{\partial_1 a(z)}{a^{\prime}(z)}-\frac{\partial_1 \zeta(z)}{\zeta^{\prime}(z)}\right)\left(\omega_2(z) a^{\prime}(z)+\hat{\omega}_2(z) \hat{a}^{\prime}(z)\right)_{-}\right)_{\geq-m+1},\right. \\
		& \left.\left(\left(\frac{\partial_1 \hat{a}(z)}{\hat{a}^{\prime}(z)}-\frac{\partial_1 \zeta(z)}{\zeta^{\prime}(z)}\right) \hat{a}^{\prime}(z) \hat{\omega}_2(z)-\left(\frac{\partial_1 \hat{a}(z)}{\hat{a}^{\prime}(z)}-\frac{\partial_1 \zeta(z)}{\zeta^{\prime}(z)}\right)\left(\omega_2(z) a^{\prime}(z)+\hat{\omega}_2(z) \hat{a}^{\prime}(z)\right)_{+}\right)_{\leq n}\right) .
	\end{aligned}
	$$
	The lemma is proved.
\end{proof}

The Euler vector field $E$ on $\mathcal{M}_{m, n}^s$ is given by the expression:
$$
E=\left(a(z)-\frac{z a^{\prime}(z)}{m}, \hat{a}(z)-\frac{z \hat{a}^{\prime}(z)}{m}\right),
$$
or, equivalently
$$
E=\sum_{i \in \mathbb{Z}}\left(\frac{1}{m}-\frac{i}{s}\right) t_i \frac{\partial}{\partial t_i}+\sum_{i \geq 1}^{m-1}\left(\frac{1}{m}+\frac{i}{m}\right) h_i \frac{\partial}{\partial h_i}+\sum_{i \geq 0}^n\left(\frac{1}{m}+\frac{i}{n}\right) \hat{h}_i \frac{\partial}{\partial \hat{h}_i}.
$$
 Hence the diagonal matrix $\mu=\operatorname{diag}\left(\mu_1, \cdots, \mu_n\right)$ defined by the formula \eqref{bigmu} takes the following form:
\begin{equation}\label{infmu}
	\mu_u= \begin{cases}\frac{i}{s}+\frac{1}{2}, & u=t_i(i \in \mathbb{Z}), \\ \frac{1}{2}-\frac{i}{m}, & u=h_i(1 \leq i \leq m-1), \\ \frac{1}{2}-\frac{i}{n}, & u=\hat{h}_i(0 \leq i \leq n).\end{cases}
\end{equation}
The intersection form $g$ on $\mathcal{M}_{m,n}^{s}$ can be constructed by using the equation \eqref{inter}. For further details, please refer to [13].

\section{Proof of theorem \ref{mainthm1}}
In this section, we will demonstrate that the functions $\theta_{\alpha, p}(t)$ on $\mathcal{M}_{m, n}^s$, given by \eqref{theta}, satisfy the equalities \eqref{princon1}-\eqref{princon3}, where the explicity form of the  constant matrix $\mu$ and $R$ are given by the formula \eqref{infmu} and 
$$
\left(R\right)_v^u= \begin{cases}1-\frac{s}{m}, & u \in\{t_0, \hat{h}_0\}, v=t_{-s}, \\ \frac{n}{m}+1, & u=\hat{h}_0, v=\hat{h}_n, \\ \frac{n}{m}-\frac{n}{s}, & u=t_0, v=\hat{h}_n, \\ 0, & \text { otherwise, }\end{cases}
$$
respectively.

The equalities \eqref{princon3} can be readily obtained by using the equations \eqref{texp}-\eqref{hhexp}. To verify the equalities \eqref{princon2}, we introduce the operator $\mathcal{E}=E+\frac{1}{m} z \frac{\partial}{\partial z}$. Then   one has the following formulas:
\begin{equation}
	\operatorname{Lie}_{\mathcal{E}} \zeta(z)=\zeta(z), \quad \operatorname{Lie}_{\mathcal{E}} a(z)=a(z), \quad \operatorname{Lie}_{\mathcal{E}} \hat{a}(z)=\hat{a}(z), \quad \operatorname{Lie}_{\mathcal{E}} \phi_p(z)=p \phi_p(z).\label{homo}
\end{equation}
From the equality \eqref{homo} and 
$$
\int_{\Gamma} \operatorname{Lie}_{\mathcal{E}} f(a, \hat{a}) d z=\operatorname{Lie}_E \int_{\Gamma} f(a, \hat{a}) d z-\frac{1}{m} \int_{\Gamma} f(a, \hat{a}) d z,
$$
it follows that
$$
\operatorname{Lie}_E \theta_{u, p}(t)= \begin{cases}\left(p+1+\frac{i}{s}+\frac{1}{m}\right) \theta_{u, p}(t), & u=t_i(i \in \mathbb{Z}-\{-s\}), \\ \left(p+\frac{1}{m}\right) \theta_{u, p}(t)+\left(1-\frac{s}{m}\right)\left(\theta_{t_0, p-1}(t)+\theta_{\hat{h}_0, p-1}(t)\right), & u=t_{-s}, \\ \left(1+p-\frac{i}{m}+\frac{1}{m}\right) \theta_{u, p}(t), & u=h_i(1 \leq i \leq m-1), \\ \left(1+p-\frac{i}{n}+\frac{1}{m}\right) \theta_{u, p}(t), & u=\hat{h}_i(0 \leq i \leq n-1), \\ \left(p+\frac{1}{m}\right) \theta_{u, p}(t)+\left(\frac{n}{m}+1\right) \theta_{\hat{h}_0, p-1}(t)+\left(\frac{n}{m}-\frac{n}{s}\right) \theta_{t_0, p-1}(t), & u=\hat{h}_n .\end{cases}
$$
Hence we obtain the equalities \eqref{princon2}.

For the set of functions $\left\{\theta_{t_i, p} \mid i \in \mathbb{Z}-\{-s\}, p \in \mathbb{N}\right\}$, let us verify the equalities \eqref{princon1}. Denote
$$
Q_{t_i, p}(a, \hat{a})=\frac{1}{(p+1) !} \frac{s}{i+s} \zeta^{\frac{i}{s}} \phi_{p+1},
$$
then one has $\theta_{t_i, p}(t)=\frac{1}{2 \pi \mathrm{i}} \int_{\Gamma} Q_{t_i, p}(a, \hat{a}) d z$ and
$$
d \theta_{t_i, p}(t)=\left(\left(\frac{\partial Q_{t_i, p}(a, \hat{a})}{\partial a}\right)_{\geq-m+1},\left(\frac{\partial Q_{t_i, p}(a, \hat{a})}{\partial \hat{a}}\right)_{\leq n}\right) .
$$
We can verify that the functions $\left\{Q_{t_i, p}(a, \hat{a})\right\}$ satisfy
$$
Q_{t_i, p}(a, \hat{a})=\frac{\partial Q_{t_i, p+1}(a, \hat{a})}{\partial a}+\frac{\partial Q_{t_i, p+1}(a, \hat{a})}{\partial \hat{a}},
$$
which leads to
\begin{equation}\label{proman1}
\frac{\partial^2 Q_{t_i, p+1}(a, \hat{a})}{\partial a \partial a}+\frac{\partial^2 Q_{t_i, p+1}(a, \hat{a})}{\partial a \partial \hat{a}}=\frac{\partial Q_{t_i, p}(a, \hat{a})}{\partial a}
\end{equation}
and
\begin{equation}\label{proman11}
\frac{\partial^2 Q_{t_i, p+1}(a, \hat{a})}{\partial a \partial \hat{a}}+\frac{\partial^2 Q_{t_i, p+1}(\hat{a}, \hat{a})}{\partial a \partial \hat{a}}=\frac{\partial Q_{t_i, p}(a, \hat{a})}{\partial \hat{a}} .
\end{equation}
We need to show that for any vector field $\partial=(\partial a, \partial \hat{a})$ on $\mathcal{M}_{m, n}^s$, the following relation holds:
\begin{equation}
\nabla_{\partial} d \theta_{t_i, p+1}(t)=C_{\partial} d \theta_{t_i, p}(t),\label{proman22}
\end{equation}
which is equivalent to the equalities \eqref{princon1}. Substituting $\vec{\omega}=d \theta_{t_i, p+1}$ into \eqref{dulconn}, one has
\begin{equation}
	\begin{aligned}\label{proman2}
		& \nabla_{\partial} d \theta_{t_i, p+1} \\
		= & \left(\left(\frac{\partial^2 Q_{t_i, p+1}}{\partial a \partial a} \partial a+\frac{\partial^2 Q_{t_i, p+1}}{\partial a \partial \hat{a}} \partial \hat{a}-\left(\left(\frac{\partial Q_{t_i, p+1}}{\partial a}\right)_{+}^{\prime}-\left(\frac{\partial Q_{t_i, p+1}}{\partial \hat{a}}\right)_{-}^{\prime}\right) \frac{\partial \zeta}{\zeta^{\prime}}\right.\right. \\
		& \left.-\left(\left(\frac{\partial Q_{t_i, p+1}}{\partial a}\right)_{-}^{\prime}+\left(\frac{\partial Q_{t_i, p+1}}{\partial \hat{a}}\right)_{-}^{\prime}\right) \frac{\partial \ell}{a^{\prime}}\right)_{\geq-m+1}, \\
		& \left(\frac{\partial^2 Q_{t_i, p+1}}{\partial a \partial \hat{a}} \partial a+\frac{\partial^2 Q_{t_i, p+1}}{\partial \hat{a} \partial \hat{a}} \partial \hat{a}+\left(\left(\frac{\partial Q_{t_i, p+1}}{\partial a}\right)_{+}^{\prime}-\left(\frac{\partial Q_{t_i, p+1}}{\partial \hat{a}}\right)_{-}^{\prime}\right) \frac{\partial \zeta}{\zeta^{\prime}}\right. \\
		& \left.\left.-\left(\left(\frac{\partial Q_{t_i, p+1}}{\partial a}\right)_{+}^{\prime}+\left(\frac{\partial Q_{t_i, p+1}}{\partial \hat{a}}\right)_{+}^{\prime}\right) \frac{\partial \ell}{\hat{a}^{\prime}}\right)_{\leq n}\right) .
	\end{aligned}
\end{equation}
By using the expressions
$$
\partial a=\partial \zeta_{-}+\partial \ell, \quad \partial \hat{a}=-\partial \zeta_{+}+\partial \ell, \quad \partial \zeta=\partial \zeta_{+}+\partial \zeta_{-},
$$
we can extract the terms that involve $\partial \zeta_{+}, \partial \zeta_{-}$and $\partial \ell$ individually in \eqref{proman2} as follows:
$$
\begin{aligned}
\partial \zeta_{+}: \quad 	& \left(\left(-\frac{\partial^2 Q_{t_i, p+1}}{\partial a \partial \hat{a}} \partial \zeta_{+}-\left(\left(\frac{\partial Q_{t_i, p+1}}{\partial a}\right)_{+}^{\prime}-\left(\frac{\partial Q_{t_i, p+1}}{\partial \hat{a}}\right)_{-}^{\prime}\right) \frac{\partial \zeta_{+}}{\zeta^{\prime}}\right)_{\geq-m+1},\right. \\
	& \left.\left(-\frac{\partial^2 Q_{t_i, p+1}}{\partial \hat{a} \partial \hat{a}} \partial \zeta_{+}+\left(\left(\frac{\partial Q_{t_i, p+1}}{\partial a}\right)_{+}^{\prime}-\left(\frac{\partial Q_{t_i, p+1}}{\partial \hat{a}}\right)_{-}^{\prime}\right) \frac{\partial \zeta_{+}}{\zeta^{\prime}}\right)_{\leq n}\right) ; \\
\partial \zeta_{-}:	\quad& \left(\left(\frac{\partial^2 Q_{t_i, p+1}}{\partial a \partial a} \partial \zeta_{-}-\left(\left(\frac{\partial Q_{t_i, p+1}}{\partial a}\right)_{+}^{\prime}-\left(\frac{\partial Q_{t_i, p+1}}{\partial \hat{a}}\right)_{-}^{\prime}\right) \frac{\partial \zeta_{-}}{\zeta^{\prime}}\right)_{\geq-m+1},\right. \\
	& \left.\left(\frac{\partial^2 Q_{t_i, p+1}}{\partial a \partial \hat{a}} \partial \zeta_{-}+\left(\left(\frac{\partial Q_{t_i, p+1}}{\partial a}\right)_{+}^{\prime}-\left(\frac{\partial Q_{t_i, p+1}}{\partial \hat{a}}\right)_{-}^{\prime}\right) \frac{\partial \zeta_{-}}{\zeta^{\prime}}\right)_{\leq n}\right) ; \\
\partial \ell:\quad	& \left(\left(\frac{\partial^2 Q_{t_i, p+1}}{\partial a \partial a} \partial \ell+\frac{\partial^2 Q_{t_i, p+1}}{\partial a \partial \hat{a}} \partial \ell-\left(\left(\frac{\partial Q_{t_i, p+1}}{\partial a}\right)_{-}^{\prime}+\left(\frac{\partial Q_{t_i, p+1}}{\partial \hat{a}}\right)_{-}^{\prime}\right) \frac{\partial \ell}{a^{\prime}}\right)_{\geq-m+1},\right. \\
	& \left.\left(\frac{\partial^2 Q_{t_i, p+1}}{\partial a \partial \hat{a}} \partial \ell+\frac{\partial^2 Q_{t_i, p+1}}{\partial \hat{a} \partial \hat{a}} \partial \ell-\left(\left(\frac{\partial Q_{t_i, p+1}}{\partial a}\right)_{+}^{\prime}+\left(\frac{\partial Q_{t_i, p+1}}{\partial \hat{a}}\right)_{+}^{\prime}\right) \frac{\partial \ell}{\hat{a}^{\prime}}\right)_{\leq n}\right) . 
\end{aligned}
$$
For the right hand side of the equality \eqref{proman22}, by the use of the formula \eqref{dulcorpro}, we can express $C_{\partial}\left(d \theta_{t_i, p}\right)$ in the form:
$$
\begin{aligned}
	& C_{\partial}\left(d \theta_{t_i, p}\right) \\
	= & \left(\left(\left(\frac{\partial a}{a^{\prime}}-\frac{\partial \zeta}{\zeta^{\prime}}\right) a^{\prime} \frac{\partial Q_{t_i, p}}{\partial a}-\left(\frac{\partial a}{a^{\prime}}-\frac{\partial \zeta}{\zeta^{\prime}}\right)\left(\frac{\partial Q_{t_i, p}}{\partial a} a^{\prime}+\frac{\partial Q_{t_i, p}}{\partial \hat{a}} \hat{a}^{\prime}\right)_{-}\right)_{\geq-m+1},\right. \\
	& \left.\left(\left(\frac{\partial \hat{a}}{\hat{a}^{\prime}}-\frac{\partial \zeta}{\zeta^{\prime}}\right) \hat{a}^{\prime} \frac{\partial Q_{t_i, p}}{\partial \hat{a}}-\left(\frac{\partial \hat{a}}{\hat{a}^{\prime}}-\frac{\partial \zeta}{\zeta^{\prime}}\right)\left(\frac{\partial Q_{t_i, p}}{\partial a} a^{\prime}+\frac{\partial Q_{t_i, p}}{\partial \hat{a}} \hat{a}^{\prime}\right)_{+}\right)_{\leq n}\right) .
\end{aligned}
$$
The terms in this expression that involve $\partial \zeta_{+}, \partial \zeta_{-}$, and $\partial \ell$ individually are given by:
$$
\begin{aligned}
	\partial \zeta_{+}:  \quad& \left(-\frac{a^{\prime}}{\zeta^{\prime}} \frac{\partial Q_{t_i, p}}{\partial a} \partial \zeta_{+}+\frac{\partial \zeta_{+}}{\zeta^{\prime}}\left(\frac{\partial Q_{t_i, p}}{\partial a} a^{\prime}+\frac{\partial Q_{t_i, p}}{\partial \hat{a}} \hat{a}^{\prime}\right)_{-}\right)_{\geq-m+1}, \\
	& \left.\left(-\frac{\partial Q_{t_i, p}}{\partial \hat{a}} \partial \zeta_{+}-\frac{\hat{a}^{\prime}}{\zeta^{\prime}} \frac{\partial Q_{t_i, p}}{\partial \hat{a}} \partial \zeta_{+}+\frac{\partial \zeta_{+}}{\zeta^{\prime}}\left(\frac{\partial Q_{t_i, p}}{\partial a} a^{\prime}+\frac{\partial Q_{t_i, p}}{\partial \hat{a}} \hat{a}^{\prime}\right)_{+}\right)_{\leq n}\right) \\
	\partial \zeta_{-}:  \quad& \left(\left(\frac{\partial Q_{t_i, p}}{\partial a} \partial \zeta_{-}-\frac{a^{\prime}}{\zeta^{\prime}} \frac{\partial Q_{t_i, p}}{\partial a} \partial \zeta_{-}+\frac{\partial \zeta_{-}}{\zeta^{\prime}}\left(\frac{\partial Q_{t_i, p}}{\partial a} a^{\prime}+\frac{\partial Q_{t_i, p}}{\partial \hat{a}} \hat{a}^{\prime}\right)_{-}\right)_{\geq-m+1},\right. \\
	& \left.\left(-\frac{\hat{a}^{\prime}}{\zeta^{\prime}} \frac{\partial Q_{t_i, p}}{\partial \hat{a}} \partial \zeta_{-}+\frac{\partial \zeta_{-}}{\zeta^{\prime}}\left(\frac{\partial Q_{t_i, p}}{\partial a} a^{\prime}+\frac{\partial Q_{t_i, p}}{\partial \hat{a}} \hat{a}^{\prime}\right)_{+}\right)_{\leq n}\right) \\
	\partial \ell:  \quad& \left(\left(\frac{\partial Q_{t_i, p}}{\partial a} \partial \ell-\frac{\partial \ell}{a^{\prime}}\left(\frac{\partial Q_{t_i, p}}{\partial a} a^{\prime}+\frac{\partial Q_{t_i, p}}{\partial \hat{a}} \hat{a}^{\prime}\right)_{-}\right)_{\geq-m+1},\right. \\
	& \left.\left(\frac{\partial Q_{t_i, p}}{\partial \hat{a}} \partial \ell-\frac{\partial \ell}{\hat{a}^{\prime}}\left(\frac{\partial Q_{t_i, p}}{\partial a} a^{\prime}+\frac{\partial Q_{t_i, p}}{\partial \hat{a}} \hat{a}^{\prime}\right)_{+}\right)_{\leq n}\right) .
\end{aligned}
$$
By comparing the extracted terms involving $\partial \zeta_{+}, \partial \zeta_{-}$, and $\partial \ell$, and using the identities \eqref{proman1} and \eqref{proman11}, we can establish the equality \eqref{proman22}. The remaining cases can be verified by using a similar procedure. Theorem \ref{mainthm1} is proved.
\section{Relationships between the principal hierarchy and other integrable hierarchies}
In this section, we commence by revisiting fundamental facets of the dispersionless extended KP hierarchy and its bihamiltonian structure.  Subsequently, we present the proof of theorem \ref{mainthm2}. Finally, we establish the connection between the principal hierarchy of $\mathcal{M}_{m, n}^s$ and the principal hierarchy of a finite-dimensional Frobenius manifold.

\subsection{The dispersionless extended KP hierarchy}
Let us review some fundamental aspects of the dispersionless extended KP hierarchy \eqref{whi1}-\eqref{whi2}, as investigated in [17]. We consider the loop space, denoted as $\mathcal{L} \mathcal{M}_{m, n}^s$, consisting of smooth maps from the unit circle $S^1$ to the manifold $\mathcal{M}_{m, n}^s$. A point in the loop space $\mathcal{L} \mathcal{M}_{m, n}^s$ can be expressed as:
$$
\vec{a}=(a, \hat{a})=\left(z^m+\sum_{i \leq m-2} a_i(x)(z-\varphi(x))^i, \sum_{i \geq-n} \hat{a}_i(x)(z-\varphi(x))^i\right) .
$$
At this point, the tangent space $T_{\vec{a}} \mathcal{L M}_{m, n}^s$ and the cotangent space $T_{\vec{a}}^* \mathcal{L} \mathcal{M}_{m, n}^s$ of the loop space take a similar form as \eqref{tans} and \eqref{cotans}, respectively.

On the loop space $\mathcal{L} \mathcal{M}_{m, n}^s$, we introduce a ring of formal differential polynomials denoted as
$$
\mathcal{A}:=C^{\infty}(\mathbf{a})\left[\left[\partial_x \mathbf{a}, \partial_x{ }^2 \mathbf{a}, \ldots\right]\right],
$$
where
$$
\mathbf{a}(x)=\left(\varphi(x), a_{m-2}(x), a_{m-3}(x), \ldots, \hat{a}_{-n}(x), \hat{a}_{-n+1}(x), \ldots\right) .
$$
We now consider the quotient space $\mathcal{F}:=\mathcal{A} / \partial_x \mathcal{A}$, whose elements are called local functionals and can be expressed in the following form:
$$
F=\int f\left(\mathbf{a}, \partial_x \mathbf{a}, \partial_x{ }^2 \mathbf{a}, \ldots\right) d x \in \mathcal{F}, \quad f \in \mathcal{A}.
$$
Given a local functional $F \in \mathcal{F}$, its variational gradient at $\vec{a}$ refers to a cotangent vector $d F \in$ $T_{\vec{a}}^* \mathcal{L} \mathcal{M}_{m, n}^s$ that satisfies
$$
\delta F=\int\langle d F, \delta \vec{a}\rangle d x,
$$
where the nondegenerate pairing $\langle\ ,\ \rangle$ is defined in \eqref{pair}.
\begin{proposition}
 $([17])$ For any positive integers $m$ and $n$, a bihamiltonian structure exists on $\mathcal{L} \mathcal{M}_{m, n}^s$ defined by two compatible Poisson brackets:
	$$
	\{F, H\}_\nu=\int\left\langle d F, \mathcal{P}_\nu d H\right\rangle d x, \quad \nu=1,2 .
	$$
	Here the Poisson tensors $\mathcal{P}_\nu: T_{\vec{a}}^* \mathcal{L} \mathcal{M}_{m, n} \rightarrow T_{\vec{a}} \mathcal{L} \mathcal{M}_{m, n}$ read
	\begin{align}
		\mathcal{P}_1 \cdot \vec{\omega}= & \left([\omega, a]_{-}+[\hat{\omega}, \hat{a}]_{-}-\left[\omega_{-}+\hat{\omega}_{-}, a\right]-[\omega, a]_{+}-[\hat{\omega}, \hat{a}]_{+}+\left[\omega_{+}+\hat{\omega}_{+}, \hat{a}\right]\right) ,\label{ham1}\\
		\mathcal{P}_2 \cdot \vec{\omega}= &\left(\left([\omega, a]_{-}+[\hat{\omega}, \hat{a}]_{-}\right) a-\left[(\omega a+\hat{\omega} \hat{a})_{-}, a\right]-\sigma a^{\prime}\right. , \nonumber\\ 
		& \left.-\left([\omega, a]_{+}+[\hat{\omega}, \hat{a}]_{+}\right) \hat{a}+\left[(\omega a+\hat{\omega} \hat{a})_{+}, \hat{a}\right]_{-}-\sigma \hat{a}^{\prime}\right)\label{ham2}
	\end{align}
	with the Lie bracket $[\ ,\ ]$ defined by  \eqref{bracket}, and
	$$
	\sigma=\frac{1}{m} \operatorname{Res}_{z=\varphi}([\omega, a]+[\hat{\omega}, \hat{a}]) d z .
	$$
	Furthermore, let
	$$
	a=\lambda(z)^m, \quad \hat{a}=\hat{\lambda}(z)^n,
	$$then the hierarchy \eqref{whi1}-\eqref{whi2} can be formulated in a bihamiltonian recursive form as follows:
	$$
	\frac{\partial F}{\partial s_k}=\left\{F, H_{k+m}\right\}_1=\left\{F, H_k\right\}_2, \quad \frac{\partial F}{\partial \hat{s}_k}=\{F, \hat{H}_{k+n}\}_1=\{F, \hat{H}_k\}_2,
	$$
	where the Hamiltonian functionals are given by\begin{equation}\label{hamitan}
		H_k=-\frac{m}{k} \int \operatorname{Res}_{z=\infty} \lambda(z)^k d x, \quad \hat{H}_k=\frac{n}{k} \int \operatorname{Res}_{z=\varphi} \hat{\lambda}(z)^k d x
	\end{equation}with $k=1,2,3, \ldots$.
\end{proposition}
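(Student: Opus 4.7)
The plan is to split the assertion into its two constituent parts: the bihamiltonian structure on $\mathcal{L}\mathcal{M}_{m,n}^s$ itself, and the bihamiltonian recursion presentation of the Whitham hierarchy \eqref{whi1}-\eqref{whi2}. These require essentially different techniques, so I would treat them separately.

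For the Poisson part, I would first verify the antisymmetry of $\mathcal{P}_1$ and $\mathcal{P}_2$ by pairing with a second covector via \eqref{pair} and using the antisymmetry of the Lie bracket \eqref{bracket} together with integration by parts in $x$; the scalar $\sigma$ in $\mathcal{P}_2$ should be shown to arise precisely so that the image of $\mathcal{P}_2$ lies in the tangent space \eqref{tans} (in particular, preserving the coefficient of $z^m$ in $a$ and the non-vanishing of $\hat a_{-n}$). For the Jacobi identity and compatibility I would recognize the underlying structure as a classical r-matrix construction: the algebra of formal Laurent series in $(z-\varphi)$ equipped with \eqref{bracket} is a Lie algebra, split as $\mathfrak{g}_+\oplus\mathfrak{g}_-$ via the decomposition of $\mathcal{H}_\varphi^\pm$ and then doubled to accommodate the pair $(a,\hat a)$. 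In this picture $\mathcal{P}_1$ is the linear Lie-Poisson bracket twisted by $R=P_+-P_-$ and $\mathcal{P}_2$ is the associated quadratic Sklyanin-type bracket, with $\sigma$ playing the role of a central correction enforcing tangency; compatibility then follows from the standard fact that linear and quadratic r-matrix brackets form a compatible pencil.

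For the recursive form I would compute the variational gradients directly. Writing $a=\lambda^m$ and varying under the residue gives $\delta H_k = -\int\operatorname{Res}_{z=\infty}\lambda^{k-m}\,\delta a\,dx$; comparing with \eqref{pair} identifies $dH_k$ modulo the projections specified by \eqref{cotans}, and an analogous calculation handles $d\hat H_k$. Substituting into \eqref{ham1}-\eqref{ham2} and invoking the elementary identity $[(\lambda^k)_+,\lambda^m]=m\lambda^{m-1}[(\lambda^k)_+,\lambda]$, I would recover the Whitham flows in the form $\partial_{s_k}a=[(\lambda^k)_+,a]$, $\partial_{s_k}\hat a=[(\lambda^k)_+,\hat a]$, and analogously for the $\hat s_k$-flows. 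The index shift by $m$ between the $\mathcal{P}_1$ and $\mathcal{P}_2$ presentations reflects the fact that multiplication by $a=\lambda^m$ inside the residue shifts $k$ by $m$.

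The main obstacle I anticipate is the bookkeeping of the projections $(\cdot)_\pm$, $(\cdot)_{\geq -m+1}$, $(\cdot)_{\leq n}$ that appear in the Poisson tensors and in the description of $T^*\mathcal{M}_{m,n}^s$, together with the correct handling of the $\sigma$ correction. A direct termwise verification of the Jacobi identity would be prohibitively long; it is the r-matrix viewpoint that makes the argument tractable, and extracting the precise doubled Lie algebra and splitting that reproduces \emph{both} \eqref{ham1} and \eqref{ham2} (including the $\sigma$ term) is the most delicate conceptual step. Once this identification is in place, the bihamiltonian recursion reduces to a routine residue calculation.
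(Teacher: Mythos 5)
The paper does not prove this proposition at all: it is quoted verbatim from reference [17] (Wu--Zhou), so there is no internal proof to compare your attempt against. Judged on its own terms, your outline follows the strategy that the cited source actually uses --- an $r$-matrix/Lie-algebra-splitting argument for the Poisson pencil, plus a direct variational-gradient computation for the recursion --- and the elementary ingredients of your second half are sound: the computation $\delta H_k=-\int\operatorname{Res}_{z=\infty}\lambda^{k-m}\,\delta a\,dx$ is correct, and the identity $[f,\lambda^m]=m\lambda^{m-1}[f,\lambda]$ follows from the Leibniz property of the bracket \eqref{bracket} in each slot.

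The genuine gap is in the step where you invoke ``the standard fact that linear and quadratic $r$-matrix brackets form a compatible pencil.'' Two things prevent this from being a citation-level step here. First, even in the classical setting the quadratic (Sklyanin-type) bracket is Poisson only under hypotheses beyond the modified Yang--Baxter equation for $R=P_+-P_-$ (skew-symmetry of $r$ with respect to the trace form, or the Oevel--Ragnisco/Li--Parmentier conditions), and you do not verify that the trace form induced by the pairing \eqref{pair} satisfies them for the doubled algebra of pairs $(f,\hat f)$. Second, and more seriously, the projections $(\cdot)_\pm$ are taken with respect to powers of $(z-\varphi(x))$ where $\varphi$ is itself a dynamical variable on the loop; hence the splitting is field-dependent, does not commute with $\partial_x$, and the textbook construction does not directly produce \eqref{ham1}--\eqref{ham2}. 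The $\sigma$-term in $\mathcal{P}_2$ is exactly the correction forced by this (it synchronizes the induced variation of $\varphi$ between the two components so that the image lands in the tangent space \eqref{tans}), and its presence means the quadratic bracket is \emph{not} the bare Sklyanin bracket; the Jacobi identity and the compatibility of the pencil must be re-verified for the corrected tensor. This is precisely the technical content of [17] and cannot be absorbed into an appeal to the standard theory. With that step either carried out directly or properly adapted from the field-dependent $r$-matrix formalism, the rest of your plan (antisymmetry by integration by parts, gradients of $H_k$, $\hat H_k$ modulo the projections of \eqref{cotans}, and the index shift by $m$ coming from multiplication by $a=\lambda^m$ under the residue) would indeed recover the recursion for the flows \eqref{whi1}--\eqref{whi2}.
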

\begin{corollary}
$([13])$ The dispersionless bihamiltonian structure defined by flat pencil $(\eta, g)$ on the loop space $\mathcal{L}\mathcal{M}_{m, n}^n$ coincide with $\left(\mathcal{P}_1, \mathcal{P}_2\right)$ given by \eqref{ham1} and \eqref{ham2}, respectively.
\end{corollary}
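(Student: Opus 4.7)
The plan is to realize the corollary as an instance of the Dubrovin--Novikov correspondence between flat pencils of metrics on a Frobenius manifold and compatible Poisson brackets of hydrodynamic type on its loop space, applied to the metrics $\eta$ and $g$ on $\mathcal{M}_{m,n}^s$. At the level of one-form fields, the Hamiltonian operator associated with a flat metric $h$ acts on $\vec{\omega} \in T^{*}_{\vec{a}}\mathcal{L}\mathcal{M}_{m,n}^s$ as $h \bigl(\partial_x \vec{\omega}\bigr) + (\text{Christoffel correction})$, where in flat coordinates of $h$ the correction vanishes identically. Since the variables $\mathbf{t}\cup\mathbf{h}\cup\hat{\mathbf{h}}$ are flat coordinates for $\eta$, and since we already have the explicit inverse $\eta^{-1}$ through formulas \eqref{cov1}--\eqref{cov2}, the computation of $\mathcal{P}_1$ reduces to unfolding $\eta\bigl(\partial_x \vec{\omega}\bigr)$ in the Laurent-series representation and matching it against the right-hand side of \eqref{ham1}.

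For $\mathcal{P}_1$, the idea is therefore to translate $\partial_x$ into a Lie-bracket operation with the pair $(a,\hat{a})$ via the identities $\partial_x a = [\log z, a]$-type relations in the flat-coordinate chart, then to apply the map $\eta$ in the form given for tangent vectors: the formula stated before \eqref{cov1} expresses $\eta\cdot\vec{\omega}$ through $\bigl[\omega+\hat{\omega}\bigr]_\pm$ and $\bigl[\omega a' + \hat{\omega}\hat{a}'\bigr]_\pm$, and these two combinations become exactly the $x$-derivatives involved in the brackets $[\omega,a]$ and $[\hat{\omega},\hat{a}]$. Collecting the positive and negative parts in $(z-\varphi)$ and recognizing the cross terms as $[\omega_\pm + \hat{\omega}_\pm,\ \cdot\ ]$ then yields \eqref{ham1} directly.

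For $\mathcal{P}_2$ the plan is analogous but uses the intersection form: by \eqref{inter} and the explicit cotangent product \eqref{corpro}, one has $g\cdot\vec{\omega} = i_E(\vec{\omega}*\vec{e}^{\,*})$ computed through $a'$ and $\hat{a}'$. Promoting this pointwise bijection to a hydrodynamic operator amounts to replacing constant multiplications by $a$ and $\hat{a}$ by the Lie brackets $[\cdot,a]$ and $[\cdot,\hat{a}]$ of \eqref{bracket}, which supplies the first two groups of terms in \eqref{ham2}, while the Christoffel correction for the second metric of the pencil manifests as the single scalar $\sigma\, a'$ / $\sigma\,\hat{a}'$ term, with $\sigma$ coming from the residue at the movable pole $z=\varphi$ because $g$ is not flat in the chosen coordinates. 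The matching is then carried out by comparing the bihamiltonian recursion on the Hamiltonians $H_k,\hat H_k$ of \eqref{hamitan} with the known action of $\mathcal{R}=\{\,,\,\}_2\cdot\{\,,\,\}_1^{-1}$ in Frobenius-manifold form, or, alternatively, by a direct projector manipulation parallel to the one done for $\mathcal{P}_1$.

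The main obstacle I expect is the bookkeeping of the truncation projectors $(\cdot)_+$, $(\cdot)_-$, $(\cdot)_{\geq -m+1}$ and $(\cdot)_{\leq n}$ that label the summands of $T^{*}\mathcal{M}_{m,n}^s$ and $T\mathcal{M}_{m,n}^s$: in \eqref{ham1} and \eqref{ham2} these projectors appear both inside and outside the brackets $[\,,\,]$, and they must be reproduced exactly from $\eta^{-1}\partial_x$ and from the intersection-form construction. In particular, the scalar $\sigma$ in \eqref{ham2}, which guarantees that $\mathcal{P}_2\vec{\omega}$ lies in the admissible Laurent class, should be recovered from the residue picked up when differentiating the movable pole $\varphi(x)$ with respect to $x$ --- making this cancellation explicit is where the verification becomes technically delicate, but it is the only non-routine step once the Dubrovin correspondence is in place.
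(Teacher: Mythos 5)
First, a point of reference: the paper does not prove this corollary at all --- it is quoted verbatim from [13], so there is no internal argument to compare yours against. Your overall framework (realize $(\mathcal{P}_1,\mathcal{P}_2)$ as the Dubrovin--Novikov hydrodynamic brackets of the flat pencil $(\eta,g)$ and match them in the Laurent-series representation) is the standard and correct route, and it is presumably what [13] does. But as written your proposal has two concrete defects beyond being a plan rather than a computation. First, your formula for the intersection-form map, $g\cdot\vec{\omega}=i_E(\vec{\omega}*\vec{e}^{\,*})$, cannot be right: $\vec{e}^{\,*}$ is the unity of the product $*$, so $\vec{\omega}*\vec{e}^{\,*}=\vec{\omega}$, and $i_E$ of a single covector is a scalar, not a tangent vector. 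The correct operator encoded by \eqref{inter} is multiplication by the Euler vector field, $\vec{\omega}\mapsto \eta\bigl(C_E(\vec{\omega})\bigr)$ with $C_E$ as in \eqref{dulcorpro}; it is precisely this insertion of $E=(a-za'/m,\,\hat{a}-z\hat{a}'/m)$ that produces the extra factors of $a$ and $\hat{a}$ distinguishing \eqref{ham2} from \eqref{ham1}, so omitting it loses the essential structure of the second bracket.

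Second, you dismiss the Christoffel correction for $\mathcal{P}_1$ on the grounds that $\mathbf{t}\cup\mathbf{h}\cup\hat{\mathbf{h}}$ are flat coordinates for $\eta$, and then propose to ``unfold $\eta(\partial_x\vec{\omega})$ in the Laurent-series representation.'' These two steps are inconsistent: the correction vanishes only in the components with respect to the flat chart, whereas \eqref{ham1} is written in the $(a,\hat{a})$ representation, where the Levi--Civita connection of $\eta$ is nontrivial. The covariant formula is $\mathcal{P}_1\vec{\omega}=\eta\bigl(\nabla_{\partial_x}\vec{\omega}\bigr)$ with $\nabla$ given by \eqref{dulconn} (taking $\partial=\partial_x$); a naive $\partial_x$ in the Laurent representation gives a different, non-tensorial expression. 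With these two corrections the matching against \eqref{ham1}--\eqref{ham2} --- including the projector bookkeeping and the appearance of $\sigma$ from the $x$-dependence of the movable pole $\varphi(x)$, which you rightly flag as the delicate part but do not carry out --- becomes a finite computation using \eqref{cov1}--\eqref{cov2}, \eqref{dulconn} and \eqref{dulcorpro}; as it stands, the decisive steps are still missing.
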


\subsection{Proof of theroem \ref{mainthm2}}
By comparing the explicit forms of the Hamiltonians $H_k, \hat{H}_k$ given by \eqref{hamitan} and the expression for $\int \theta_{u, p}, d x$ provided in \eqref{theta}, we can immediately deduce the following relationships:
$$
H_k= \begin{cases}\frac{\Gamma\left(1+p-\frac{i}{m}\right)}{\Gamma\left(1-\frac{i}{m}\right)} \int \theta_{h_i, p} d x, & k=m(p+1)-i, \quad i=1, \cdots, m-1, \\ \Gamma(p+1)\left(\int \theta_{t_0, p} d x+\int \theta_{\hat{h}_0, p} d x\right), & k=m(p+1),\end{cases}
$$
and
$$
\hat{H}_k=\frac{\Gamma\left(1+p-\frac{i}{n}\right)}{\Gamma\left(1-\frac{i}{n}\right)} \int \theta_{\hat{h}_i, p} d x, \quad k=n(p+1)-i, \quad i=0, \cdots, n-1 .
$$
Considering the logarithmic flows defined in equation \eqref{whi3}, we observe that, using the formula $\left[\frac{\partial Q_{u, p}}{\partial a}, a\right]+\left[\frac{\partial Q_{u, p}}{\partial \hat{a}}, \hat{a}\right]=0$, the Hamiltonian vector fields of $\theta_{u, p}=\frac{1}{2 \pi \mathrm{i}} \int_{\Gamma} Q_{u, p} d z$ under the Hamiltonian structure $\mathcal{P}_1$ can be expressed as follows:
$$
\begin{aligned}
	\mathcal{P}_1 \cdot d \theta_{u, p} & =\left(-\left[\left(\frac{\partial Q_{u, p}}{\partial a}+\frac{\partial Q_{u, p}}{\partial \hat{a}}\right)_{-}, a\right],\left[\left(\frac{\partial Q_{u, p}}{\partial a}+\frac{\partial Q_{u, p}}{\partial \hat{a}}\right)_{+}, \hat{a}\right]\right) \\
	& =\left(-\left[\left(Q_{u, p-1}\right)_{-}, a\right],\left[\left(Q_{u, p-1}\right)_{+}, \hat{a}\right]\right).
\end{aligned}
$$
Then
$$
\begin{aligned}
	\mathcal{P}_1 \cdot d \theta_{\hat{h}_n, 1} & =\left(-\left[\left(Q_{\hat{h}_n, 0}\right)_{-}, a\right],\left[\left(Q_{\hat{h}_n, 0}\right)_{+}, \hat{a}\right]\right) \\
	& =\left(-\left[\left(n \log \hat{a}^{\frac{1}{n}} \zeta^{\frac{1}{m}}+n \log \frac{a^{\frac{1}{m}}}{\zeta^{\frac{1}{m}}}\right)_{-}, a\right],\left[\left(n \log \hat{a}^{\frac{1}{n}} \zeta^{\frac{1}{m}}+n \log \frac{a^{\frac{1}{m}}}{\zeta^{\frac{1}{m}}}\right)_{+}, \hat{a}\right]\right) \\
	& =\left(-\left[\left(n \log \hat{a}^{\frac{1}{n}}(z-\varphi)+n \log \frac{a^{\frac{1}{m}}}{z-\varphi}\right)_{-}, a\right],\left[\left(n \log \hat{a}^{\frac{1}{n}}(z-\varphi)+n \log \frac{a^{\frac{1}{m}}}{z-\varphi}\right)_{+}, \hat{a}\right]\right) \\
	& =\left(-\left[n \log \frac{a^{\frac{1}{m}}}{z-\varphi}, a\right],\left[n \log \hat{a}^{\frac{1}{n}}(z-\varphi), \hat{a}\right]\right) \\
	& =\left(\left[n \log (z-\varphi), a\right],\left[n \log (z-\varphi), \hat{a}\right]\right) .
\end{aligned}
$$
Therefore, we can conclude that $\frac{\partial}{\partial \hat{s}_0}=\frac{1}{n} \frac{\partial}{\partial T^{h_n, 0}}$ and the principal hierarchy of $\mathcal{M}_{m, n}^s$ is an extension of the Whitham hierarchy \eqref{whi1}-\eqref{whi3}. Theroem \ref{mainthm2} is proved.

Let us give some examples of the flows $\left\{\frac{\partial}{\partial T^{u, p}}\right\}$ given by the Hamiltonian densities $\left\{\theta_{u, p}\right\}$.
\begin{example}
	The explicity form of the flow $\frac{\partial}{\partial T^{t_i, 1}}$ for $i \neq-s$ is given by:
	$$
	\begin{aligned}
		\frac{\partial a}{\partial T^{t_i, 1}} & =-\left[\left(Q_{t_i, 0}\right)_{-}, a\right] \\
		& =-\left[\left(\frac{s}{i+s} \zeta^{\frac{i+s}{s}}\right)_{-}, a\right] \\
		& =-\left[(a-\hat{a})^{\frac{i}{s}}\left(a^{\prime}-\hat{a}^{\prime}\right)\right]_{-} a_x+\left[(a-\hat{a})^{\frac{i}{s}}\left(a_x-\hat{a}_x\right)\right]_{-} a^{\prime}
	\end{aligned}
	$$and
	$$
	\frac{\partial \hat{a}}{\partial T^{t_i, 1}}=\left[(a-\hat{a})^{\frac{i}{s}}\left(a^{\prime}-\hat{a}^{\prime}\right)\right]_{+} \hat{a}_x-\left[(a-\hat{a})^{\frac{i}{s}}\left(a_x-\hat{a}_x\right)\right]_{+} \hat{a}^{\prime} .
	$$
\end{example}
\begin{example}
	The explicity form of the flow $\frac{\partial}{\partial T^{t-s, 1}}$ is given by:
	$$
	\begin{aligned}
		\frac{\partial a}{\partial T^{t_i, 1}} & =-\left[\left(Q_{t_{-s}, 0}\right)_{-}, a\right] \\
		& =-\left[\left(\frac{s}{m} \log \frac{\zeta^{\frac{m}{s}}}{a}\right)_{-}, a\right] \\
		& =-\left[\frac{\left(a^{\prime}-\hat{a}^{\prime}\right)}{(a-\hat{a})}-\frac{s a^{\prime}}{m a}\right]_{-} a_x+\left[\frac{\left(a_x-\hat{a}_x\right)}{(a-\hat{a})}-\frac{s a_x}{m a}\right]_{-} a^{\prime},
	\end{aligned}
	$$and
	$$
	\frac{\partial \hat{a}}{\partial T^{t_i, 1}}=\left[\frac{\left(a^{\prime}-\hat{a}^{\prime}\right)}{(a-\hat{a})}-\frac{s a^{\prime}}{m a}\right]_{+} \hat{a}_x-\left[\frac{\left(a_x-\hat{a}_x\right)}{(a-\hat{a})}-\frac{s a_x}{m a}\right]_{+} \hat{a}^{\prime} .
	$$
\end{example}
\subsection{Relation to the princial hierarchy of a finite-dimensional Frobenius mainfold}
To reduce $\mathcal{M}_{m, n}^s$ to a finite-dimensional Frobenius manifold, slight modifications are required in its definition.

Typically, to ensure the well-definedness of the Hamiltonian densities $\theta_{t_{-s}, p}$ and $\theta_{\hat{h}_n, p}$ given in \eqref{theta}, condition (4) in definition \ref{deffm} is imposed. However, it is possible to remove this condition by redefining $\theta_{t_{-s}, p}$ and $\theta_{\hat{h}_n, p}$ as follows:
$$
\begin{aligned}
	\theta_{t_{-s}, p}= & \frac{1}{2 \pi \mathrm{i}} \frac{s}{m} \int_{\Gamma} \frac{a^p}{p !} \log \frac{\zeta^{\frac{m}{s}}}{(z-\varphi)^m} d z-\frac{s}{m} \operatorname{Res}_{\infty} \frac{a^p}{p !}\left(\log \frac{(z-\varphi)^m}{a}+c_p\right) d z, \\
	\theta_{\hat{h}_n, p}= & \frac{1}{2 \pi \mathrm{i}} \frac{n}{m} \int_{\Gamma} \frac{\hat{a}^p}{p !} \log \frac{\zeta^{\frac{m}{s}}}{(z-\varphi)^m} d z+\frac{n}{m} \operatorname{Res}_{z=\varphi} \frac{\hat{a}^p}{p !}\left(\log \left(\hat{a}^{\frac{m}{n}}(z-\varphi)^m\right)-\frac{m}{n} c_p\right) d z \\
	& \left.-\frac{1}{2 \pi \mathrm{i}} \frac{n}{m} \int_{\Gamma} \frac{a^p}{p !} \log \frac{\zeta^{\frac{m}{s}}}{(z-\varphi)^m} d z+\frac{n}{m} \operatorname{Res}_{\infty} \frac{a^p}{p !}\left(\log \frac{(z-\varphi)^m}{a}\right)+c_p\right) d z .
\end{aligned}
$$
It should be noted that even with this modified formulation, the theorems \ref{mainthm1} and \ref{mainthm2} still hold true.

As $\zeta(z)$ approaches zero, the manifold $\mathcal{M}_{m, n}^s$ converges to the finite-dimensional manifold denoted by $M$. This manifold is defined as the set of rational functions with the form:
$$
\ell(z)=z^m+a_{m-2} z^{m-2}+\cdots+a_0+a_{-1}(z-\varphi)^{-1}+\cdots+a_{-n}(z-\varphi)^{-n},
$$
and carries a Frobenius manifold structure. For tangent vectors $\partial^{\prime}, \partial^{\prime \prime}, \partial^{\prime \prime \prime} \in T_{\ell(z)} M$, the invariant metric $\eta$ and the 3-tensor $c$ asscoiated with $M$ are given by:
$$
\begin{aligned}
	& \eta\left(\partial^{\prime}, \partial^{\prime \prime}\right)=\sum_{|\ell|<\infty} \operatorname{Res}_{d \ell=0} \frac{\partial^{\prime}(\ell(z) d z) \partial^{\prime \prime}(\ell(z) d z)}{d \ell(z)}, \\
	& c\left(\partial^{\prime}, \partial^{\prime \prime}, \partial^{\prime \prime \prime}\right):=\sum_{|\ell|<\infty} \operatorname{Res}_{d \ell=0} \frac{\partial^{\prime}(\lambda(z) d z) \partial^{\prime \prime}(\ell(z) d z) \partial^{\prime \prime \prime}(\ell(z) d z)}{d \ell(z) d z} .
\end{aligned}
$$
The multiplication structure is defined by $c\left(\partial^{\prime}, \partial^{\prime \prime}, \partial^{\prime \prime \prime}\right)=\eta\left(\partial^{\prime} \circ \partial^{\prime \prime}, \partial^{\prime \prime \prime}\right)$. The flat coodinates denoted by $\left\{t_{0, j}\right\}_{j=1}^{m-1} \cup\left\{t_{1, j}\right\}_{j=0}^n$ of the flat metric $\eta$ is given by
$$
z= \begin{cases}t_{1,0}+t_{1,1} \ell(z)^{-\frac{1}{n}}+t_{1,2} \ell(z)^{-\frac{2}{n}}+\cdots, & z \rightarrow \varphi, \\ \ell^{\frac{1}{m}}-t_{0,1} \ell(z)^{-\frac{1}{m}}-t_{0,2} \ell(z)^{-\frac{2}{m}}+\cdots, & z \rightarrow \infty .\end{cases}
$$
The principal hierarchy of $M$ is provided by the following Hamiltonian density:
$$
\begin{aligned}
	\theta_{t_{0, j} ; p} & =-\operatorname{Res}_{\infty} \frac{\Gamma\left(1-\frac{j}{m}\right)}{\Gamma\left(2+p-\frac{j}{m}\right)} \ell(z)^{\frac{(p+1) m-j}{m}} d z, \quad j=1, \cdots, m-1, \\
	\theta_{t_{1, j} ; p} & =\operatorname{Res}_{\varphi} \frac{\Gamma\left(1-\frac{j}{n}\right)}{\Gamma\left(2+p-\frac{j}{n}\right)} \ell(z)^{\frac{(p+1) n-j}{n}} d z, \quad j=0, \cdots, n-1, \\
	\theta_{t_{1, n} ; p} & =-\operatorname{Res}_{\infty}\left(\frac{n}{m} \frac{\ell(z)^p}{p !}\left(\log \frac{\ell(z)}{(z-\varphi)^m}-c_{0, p}\right)\right) d z+\operatorname{Res}_{\varphi}\left(\frac{\ell(z)^p}{p !}\left(\log (z-\varphi)^n \ell(z)-c_{1, p}\right)\right) d z,
\end{aligned}
$$
where
$$
c_{0, p}=\frac{1}{m} \sum_{s=1}^p \frac{1}{s}, \quad c_{1, p}=\frac{1}{n} \sum_{s=1}^p \frac{1}{s}, \quad c_{0,0}=c_{1,0}=0 .
$$
For more details, please refer to $[19,1]$. Under the mentioned limit, it is evident that the Frobenius manifold structure on $\mathcal{M}_{m, n}^s$ will converge to that of $M$. Moreover, the Hamiltonian densities $\left\{\theta_{h_i ; p}, \theta_{\hat{h}_j ; p}\right\}$ will converge to $\left\{\theta_{t_{0, i} ; p}, \theta_{t_{1, j} ; p}\right\}$, respectively.

\section{Conclusion}
In this article, we have constructed the principal hierarchy of the infinite-dimensional Frobenius manifold $\mathcal{M}_{m, n}^s$ via providing an explicit solution of the deformed flatness equations of $\mathcal{M}_{m, n}^s$. We have further demonstrated that this hierarchy serves as an extension of the special genus zero Whitham hierarchy.

We conjecture the existence of an extension of the general Whitham hierarchy that serves as the principal hierarchy of some infinite-dimensional Frobenius manifolds, which we intend to investigate in our future studies. Moreover, we aim to conduct a comprehensive study of the tau function associated with this hierarchy and establish its intricate relationships with the seminal works in $[18,20$, $21,22]$.

\textbf{Acknowledgements}. This research did not receive any specific grant from funding agencies in the public, commercial, or not-for-profit sectors. I am grateful to Chao-Zhong Wu and Dafeng Zuo for their helpful insights and guidance in pointing out this promising research direction.


\begin{thebibliography}{99}
	\bibitem{1} B. Dubrovin, Geometry of 2D topological field theories, 1st Edition, Integrable Systems and
	Quantum Groups, Springer, Berlin, Heidelberg, 1998, pp. 120--348.
	\bibitem{2} I. I. Manin, Frobenius manifolds, quantum cohomology, and moduli spaces, Vol. 47, American
	Mathematical Soc., 1999.
	\bibitem{3} C. Hertling, Frobenius manifolds and moduli spaces for singularities, Vol. 151, Cambridge University
	Press, 2002.
	\bibitem{4} B. Dubrovin, Y. Zhang, Normal forms of hierarchies of integrable PDEs, Frobenius manifolds and
	Gromov-Witten invariants, arXiv preprint math/0108160 (2001).
	\bibitem{5} A. B. Givental, T. E. Milanov, Simple singularities and integrable hierarchies, The Breadth of
	Symplectic and Poisson Geometry: Festschrift in Honor of Alan Weinstein (2005) 173--201.
	\bibitem{6} S.-Q. Liu, Y. Ruan, Y. Zhang, BCFG Drinfeld–Sokolov hierarchies and FJRW-theory, Inventiones
	mathematicae 201 (2015) 711--772.
	\bibitem{7} B. Dubrovin, S.-Q. Liu, D. Yang, Y. Zhang, Hodge integrals and tau-symmetric integrable hierarchies
	of Hamiltonian evolutionary PDEs, Advances in Mathematics 293 (2016) 382--435.
	\bibitem{8} S.-Q. Liu, Z. Wang, Y. Zhang, Variational bihamiltonian cohomologies and integrable hierarchies
	I: foundations, Communications in Mathematical Physics (2023) 1--47.
	\bibitem{9} S.-Q. Liu, Z. Wang, Y. Zhang, Variational bihamiltonian cohomologies and integrable hierarchies
	II: Virasoro symmetries, Communications in Mathematical Physics 395 (1) (2022) 459--519.
	\bibitem{10} G. Carlet, B. Dubrovin, L. P. Mertens, Infinite-dimensional Frobenius manifolds for (2 + 1) integrable
	systems, Mathematische Annalen 349 (2011) 75--115.
	\bibitem{11} C.-Z. Wu, D. Xu, A class of infinite-dimensional Frobenius manifolds and their submanifolds,
	International Mathematics Research Notices 2012 (19) (2012) 4520--4562.
	\bibitem{12} C.-Z. Wu, D. Zuo, Infinite-dimensional Frobenius manifolds underlying the Toda lattice hierarchy,
	Advances in Mathematics 255 (2014) 487--524.
	\bibitem{13} S. Ma, C.-Z. Wu, D. Zuo, Infinite-dimensional Frobenius manifolds underlying an extension of
	the dispersionless Kadomtsev–Petviashvili hierarchy, Journal of Geometry and Physics 161 (2021)
	104006.
	\bibitem{14} A. Raimondo, Frobenius manifold for the dispersionless Kadomtsev-Petviashvili equation, Communications
	in Mathematical Physics 311 (2012) 557--594.
	\bibitem{15} B. M. Szablikowski, Classical r-matrix like approach to Frobenius manifolds, WDVV equations
	and flat metrics, Journal of Physics A: Mathematical and Theoretical 48 (31) (2015) 315203.
	
	\bibitem{16} G. Carlet, L. P. Mertens, Principal hierarchies of infinite-dimensional Frobenius manifolds: The
	extended 2D Toda lattice, Advances in Mathematics 278 (2015) 137--181.
	\bibitem{17} C.-Z. Wu, X. Zhou, An extension of the Kadomtsev–Petviashvili hierarchy and its Hamiltonian
	structures, Journal of Geometry and Physics 106 (2016) 327--341.
	\bibitem{18} I. M. Krichever, The $\tau$-function of the universal Whitham hierarchy, matrix models and topological
	field theories, Communications on Pure and Applied Mathematics 47 (4) (1994) 437--475.
	\bibitem{19} S. Aoyama, Y. Kodama, Topological Landau-Ginzburg theory with a rational potential and the
	dispersionless KP hierarchy, Communications in mathematical physics 182 (1996) 185--219.
	\bibitem{20} P. B. Wiegmann, A. Zabrodin, Conformal maps and integrable hierarchies, Communications in
	Mathematical Physics 213 (2000) 523--538.
	\bibitem{21} M. Manas, E. Medina, L. M. Alonso, On the Whitham hierarchy: dressing scheme, string equations
	and additional symmetries, Journal of Physics A: Mathematical and General 39 (10) (2006) 2349.
	\bibitem{22} L. M. Alonso, E. Medina, M. Manas, String equations in Whitham hierarchies: $\tau$-functions and
	Virasoro constraints, Journal of mathematical physics 47 (8) (2006) 083512.
	
\end{thebibliography}
\end{document}